\newcommand{\gettikzxy}[3]{%
  \tikz@scan@one@point\pgfutil@firstofone#1\relax
  \edef#2{\the\pgf@x}%
  \edef#3{\the\pgf@y}%
}
\newcommand{\mathleft}{\@fleqntrue\@mathmargin0pt}
\newcommand{\mathcenter}{\@fleqnfalse}
\newtheorem{theorem}{Theorem}
\newtheorem{lemma}{Lemma}
\newtheorem{corollary}[lemma]{Corollary}
\newtheorem{definition}{Definition}
\newtheorem{proposition}{Proposition}
\newtheorem{example}{Example}
\newcommand{\utag}[2]{\mathop{#2}\limits^{\text{(#1)}}}
\newcommand{\uref}[1]{(#1)}
\newcommand{\Ie}{\textit{i.e., }}
\newcommand{\Eg}{\textit{e.g., }}
\long\def\symbolfootnote[#1]#2{\begingroup
\def\thefootnote{\fnsymbol{footnote}}\footnote[#1]{#2}\endgroup}
\newcommand{\cA}{\mathcal{A}}
\newcommand{\cB}{\mathcal{B}}
\newcommand{\cD}{\mathcal{D}}
\newcommand{\cF}{\mathcal{F}}
\newcommand{\cN}{\mathcal{N}}
\newcommand{\cO}{\mathcal{O}}
\newcommand{\cQ}{\mathcal{Q}}
\newcommand{\cS}{\mathcal{S}}
\newcommand{\cX}{\mathcal{X}}
\newcommand{\cZ}{\mathcal{Z}}
\newcommand{\sF}{\mathscr{F}}
\newcommand{\sP}{\mathscr{P}}
\newcommand{\sW}{\mathscr{W}}
\newcommand{\sX}{\mathscr{X}}
  \pgfplotsset{compat=newest}
\definecolor{light_gray}{RGB}{220,220,220}
\def\thickhline{%
  \noalign{\ifnum0=`}\fi\hrule \@height \thickarrayrulewidth \futurelet
   \reserved@a\@xthickhline}
\def\@xthickhline{\ifx\reserved@a\thickhline
               \vskip\doublerulesep
               \vskip-\thickarrayrulewidth
             \fi
      \ifnum0=`{\fi}}
\newlength{\thickarrayrulewidth}
\definecolor{lightblue}{RGB}{220, 235, 242}
\definecolor{darkgreen}{RGB}{34,139,34}
\definecolor{bleudefrance}{rgb}{0.19, 0.55, 0.91}
\definecolor{darkgreen2}{RGB}{242,248,241}
\definecolor{bleudefrance2}{RGB}{229, 240, 253}
\definecolor{magenta2}{RGB}{252, 241, 249}
\newcommand{\Markov}[3]{
\begin{scope}[shift={#1},scale=#3]
\tikzstyle{every node}=[font=\scriptsize]
        \tikzset{node style/.style={state, 
        							inner sep=0.5pt,
                                    minimum size = 12pt,
                                    line width=0.05mm,
                                    fill=white},
                    LabelStyle/.style = { minimum width = 1em,
                                            text = black},
                   EdgeStyle/.append style = {->, bend left=22, line width=0.05mm} }

        \node[node style] at (4,7)     (pro-left)     {1};
        \node[node style] at (5.75, 7)     (pro-right)     {2};
        \node[node style] at (5.25, 5.5)     (N)     {N};
		  \Edge(pro-left)(pro-right);
  			\Edge(pro-right)(pro-left);
  			\Edge(pro-left)(N);
  			\Edge(N)(pro-left);
            
            \path[<-] (pro-left) edge  [loop left] node {} ();
            \path[<-] (pro-right) edge  [loop right] node {} ();
\end{scope}
}
\newcommand{\Pb}[1]{\Pr\left(#1\right)}
\newcommand{\pb}[1]{p\left(#1\right)}
\newcommand{\qb}[1]{q\left(#1\right)}
\newcommand{\wb}[1]{w\left(#1\right)}
\begin{document}
\title{ON-OFF Privacy in the Presence of Correlation} 

\author{%
  \IEEEauthorblockN{Fangwei Ye\IEEEauthorrefmark{1}, Carolina Naim\IEEEauthorrefmark{2}, Salim El Rouayheb\IEEEauthorrefmark{2}%
  \thanks{A preliminary version of this paper was presented at IEEE International Symposium on Information Theory, Los Angeles, CA, USA, 2020. 
} 
}

  \IEEEauthorblockA{\IEEEauthorrefmark{1}%
Broad Institute of MIT and Harvard,                   
              Email: fye@broadinstitute.org} 

                \and

                \IEEEauthorblockA{\IEEEauthorrefmark{2}%
   Rutgers University, Email: \{carolina.naim, salim.elrouayheb\}@rutgers.edu}

}

\maketitle

\begin{abstract}

We formulate and study the problem of  ON-OFF privacy. ON-OFF privacy algorithms enable a user to continuously switch   his privacy   between  ON and OFF.  An obvious example  is the incognito mode in internet browsers. But beyond internet browsing, ON-OFF privacy   can be a desired feature in most online applications. The challenge is that  the statistical correlation over time of a user's online behavior can lead to leakage of information.

We consider  the setting in which a user is interested in retrieving the latest message generated by one of $N$ sources. The user's privacy status can change between ON and OFF over time. When privacy is ON the user wants to hide his request. Moreover, since the user's requests depend on personal attributes such as age, gender, and political views, they are typically correlated over time. As a consequence,  the user cannot simply ignore privacy when privacy is OFF. We  model the correlation between user's requests by an $N$ state Markov chain. The goal is to design query schemes with optimal download rate, that preserve privacy in an ON-OFF privacy setting. In this paper, we present inner and outer bounds on the achievable download rate for $N$ sources.  
 We also devise an efficient algorithm to construct an ON-OFF privacy scheme achieving the inner bound and prove its optimality in the case 
 $N=2$ sources. For $N > 2$, finding tighter outer bounds and efficient constructions of ON-OFF privacy schemes that would achieve them remains an open question.


\end{abstract}

\section{Introduction}
\label{sec:introduction}

  Privacy is a major concern for online users who can unknowingly reveal critical personal information (age, sex, diseases, political proclivity, etc.) through daily online activities such as watching online videos, following people and liking posts on social media, reading news, and searching websites. This is a well-acknowledged concern and has lead to many interesting theoretical problems such as anonymity \cite{Sweeney_2002}, differential privacy \cite{Dwork_2006}, private information retrieval \cite{Chor_1995}, and other privacy-preserving algorithms.

 
The implicit assumption that is common in   existing privacy models is that   the user wants  privacy {\em all the time}.
 We    refer to it    as privacy being always ON. However, privacy-preserving algorithms  incur high    costs   on the service provider, and can lead to degraded quality of service  at the user's side. One should  think of privacy as an expensive utility,  which should be turned ON only when   needed (depending on geographical  location, device, network, etc.). This motivated us to introduce  and study the problem of   ON-OFF privacy~\cite{Naim_2019}.   ON-OFF privacy algorithms enable a user to   switch   his/her privacy   between  ON and OFF.  A current   application that allows to switch between a private and a non-private mode is     internet browsers. But beyond internet browsing, ON-OFF privacy   can be a desired feature in many online applications.


One may be tempted to propose the simple solution in which the user has available to him two schemes, one private and one non-private. Over time, the user simply switches between these two schemes depending on whether privacy is turned ON or OFF. The problem with this solution is that it  guarantees privacy only if the user's online activities are statistically independent over time. However, a user's online activities are typically personal, making them correlated over time. For example, a bilingual English/Spanish user, who is  checking  the news in Spanish now, is more likely to keep reading the news in Spanish for a while before switching to English. At that point English becomes more probable.  Another example is when  the user   is watching online videos. One may think of a scenario where the user is more likely to watch the top item from a  list of recommended videos  that depends on the previously watched videos.
Thus,  due to correlation, simply ignoring the privacy requirement when privacy is OFF may reveal information about the activities when privacy was ON.  Location based services are another example that can benefit from ON-OFF privacy algorithms. Imagine a  user who does not care about revealing his/her location   right now, but wants to hide it  a minute ago. He/she still has to be careful not to completely reveal his/her current location because it will leak information about where he/she was a minute ago.  

\subsection{Example}
\label{sec:example}
To be more concrete and to  gently introduce our setup for ON-OFF privacy, we give the following example.  Suppose a user is watching political or news videos online. At each time $t$, the user has a choice between two new videos each of which is produced by two different news sources, $A$ or $B$.  Source $A$  is politically left-leaning and source $B$ is right-leaning.  

Let $X_t\in\{A,B\}$ be the source whose video the user wants to watch at time $t \in \mathbb{N}$. We model the correlation among the user's requests by assuming that $X_t$ is the two-state Markov chain  depicted in Figure~\ref{fig:example}, where the transition probabilities are given by $\alpha=\Pr(X_{t+1}=B\mid X_t=A)$ and 
$\beta=\Pr(X_{t+1}=A\mid X_t=B)$.
For illustration, we choose $\alpha=\beta=0.2$. This means that if the current video being watched is left-leaning, there is an $80\%$ chance that the next video is also left-leaning, and vice versa. 

\begin{figure}[b]
\vspace{-1pt}
\centering
\begin{tikzpicture}[thick,scale=0.8, every node/.style={scale=0.8}]
 \tikzstyle{every node}=[font=\normalsize]
        \tikzset{node style/.style={state, 
        							inner sep=0.5pt,
                                    minimum size = 25pt,
                                    line width=0.2mm,
                                    fill=white},
                    LabelStyle/.style = { minimum width = 1em, fill = white!10,
                                            text = black},
                    EdgeStyle/.append style = {->, bend left=25, line width=0.2mm} }

        \node[node style] at (4,7)     (pro-left)     {A};
        \node[node style] at (8, 7)     (pro-right)     {B};
		  \Edge[label = $\alpha$](pro-left)(pro-right)
  			\Edge[label = $\beta$](pro-right)(pro-left)
            
            \Loop[dist = 2.5cm, dir = NO, label = $1-\alpha$](pro-left.west)
            \Loop[dist = 2.5cm, dir = SO, label = $1-\beta$](pro-right.east)           
    \end{tikzpicture}
    \caption{The two-state Markov chain representing the  correlation of the user's requests $X_t, t \in \mathbb{N}$.  }
\label{fig:example}
\end{figure}
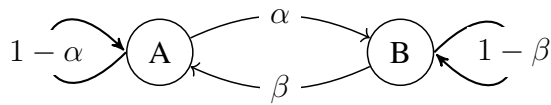

For the sake of brevity, we focus on the two time instants $t=0$ and $t=1$, and assume that privacy is  ON at $t=0$ and is switched to OFF at $t=1$. This means that the user would like to hide whether he was watching a left-leaning or a right-leaning video at time $t=0$, but does not care about revealing the source of the  video he watched at $t=1$.

  The goal is to devise an ON-OFF privacy scheme that  always gives the user the video he wants, but never reveals the choice of sources when privacy is ON, \Ie $t=0$ in this case. 
  More precisely, the server observes queries at both times $t=0$ and $t=1$, \Ie $Q_0$ and $Q_1$, which should be independent of the user's interest at time $t=0$ when privacy was ON, \Ie $X_0$.
  We are interested in schemes that minimize the download cost, or equivalently maximize the download rate (the inverse of the normalized download cost). 

%

At  $t=0$, the problem is simple. The user  achieves privacy by downloading both videos. We say that the user's  query at $t=0$ is $Q_{0}=AB$. Therefore, the download rate at $t=0$ is $R_0=1/2$.

At  $t=1$, the privacy is OFF. Now, the user must be careful not to directly declare his request, because this may reveal information about his  request at $t=0$ which is to remain private. The user can again download both videos, \Ie $Q_{1}=AB$,  and achieve privacy with a rate $R_1=1/2$. 

Our key result is that the user can achieve a better expected rate at $t=1$, without compromising privacy,  by 
\begin{itemize}
\item choosing randomly between downloading  $A$ ($Q_{1}=A$) or both $A$ and $B$ ($Q_{1}=AB$) if he wants $X_1=A$,
\item choosing randomly between downloading  $B$ ($Q_{1}=B$) or both $A$ and $B$ ($Q_{1}=AB$) if he wants  $X_1=B$. 
\end{itemize}

%
%
%
%
%
\begin{table}[t]
\normalsize
\centering
	\begin{tabular}{ c c |c c c }
	$X_{0}$ & $X_1$& $Q_1=A$ &$Q_1=B$& $Q_1=AB$\\
	 \toprule
	 $A$ & $A$ &$0.25$&$0$&$0.75$\\
	 $A$ & $B$& $0$ &$1$ &$0$\\
	 $B$ & $A$&$1$&$0$&$0$\\
	 $B$ & $B$&$0$&$0.25$&$0.75$\\

	\end{tabular}
	\caption{An example of our ON-OFF privacy scheme for $\alpha=
\beta=0.2$. The query $Q_{1}$ at $t=1$ is a probabilistic function of $X_0$ and $X_1$, the requests at $t=0$ and $t=1$ respectively. The entries of the table represent the probabilities $p(Q_{1}\mid X_{0},X_1)$. $Q_1=AB$ means that the user downloads the videos from both sources $A$ and $B$. }
	\label{table:example}
	\vspace{-10pt}
\end{table}

This random choice must also depend on the request $X_0$ at $t=0$.  The different probabilities defining the    scheme are given  in Table~\ref{table:example} and will be justified later when we explain the general scheme. For now, one can check that these probabilities lead to 
$$\Pr(Q_1 =q)=\Pr(Q_1 =q\mid X_0 =x_0),$$
 for  any $q \in \{A, B, AB\}$ and any $x_0 \in \{A, B\}$. Thus, $X_0$ and $Q_1$ are independent and  the proposed scheme in  Table~\ref{table:example} achieves perfect privacy for the request at $t=0$. 
 Moreover, the scheme ensures that the user  always obtains  the video he is requesting.





For $t=1$, the rate  $R_1=1/(2-\alpha-\beta)=0.625$, which is strictly greater than $0.5$, the rate of querying  both files. We later show that this rate is actually optimal. In fact, the values in Table \ref{table:example} were carefully chosen to achieve the privacy at the highest download rate. Any other choice of the probabilities $p(Q_{1}\mid X_{0},X_1)$ would either violate privacy or lose the optimality of the rate.


\subsection{Related Work}
The ON-OFF privacy problem for $N=2$ sources was first introduced in \cite{Naim_2019}. The  similar setting was later considered in \cite{Ye_2019,TIFS} with a more stringent requirement that the privacy of both past and future requests are preserved. The concept of ON-OFF privacy was also applied to preserve privacy of sensitive genotypes in genomics in \cite{Genotype_2020}.

The special case of the ON-OFF privacy problem in which privacy is always ON, and the user's requests are independent,  reduces to the  information-theoretic private information retrieval (PIR) problem  on a single server. In this case, the best thing the user can do is download everything \cite{Chor_1995}. Except in the case when the user can use side information, which was recently studied in \cite{Kadhe_2017}. Recently, there has been significant research activity on determining the maximum download rate of PIR with  multiple servers  (\Eg \cite{Shah_2014, Sun_2017, Tajeddine_2016, Freij-Hollanti_2017, Banawan_2018}). However,  the model there requires multiple servers and,  in the parlance of this paper,  privacy is assumed to be  always ON.

A related problem that considers privacy with correlation, namely location privacy, was studied in \cite{Shokri_2011,Shokri_2017,trace_privacy,Gunduz_2021,k-anonym_2010,Tandon}, where the correlation is usually modeled by a Markov chain and the privacy notions include $k$-anonymity \cite{k-anonym_2010}, (extended) differential privacy \cite{trace_privacy}, and distortion privacy  \cite{Shokri_2017}.
The works of  \cite{Gunduz_2021,Tandon} recently studied the information-theoretic privacy measure in location-privacy protection mechanisms, and their privacy metric was defined by the mutual information between the released data and the true traces. In this paper's language, it can be viewed as the case when privacy is always ON. 
However, in this paper, we want to prevent the adversary from inferring a selective part of the requests specified by an ON or OFF privacy status, and the simple time-sharing (switching between a private and a non-private scheme according to the privacy status) approach is not permissible due to the correlation.



\subsection{Contributions}
In this paper, we introduce a model to capture the ON-OFF privacy problem when the user is downloading data from online sources. We consider the setup in which there are $N$ information sources each generating a new message at each time $t$. At each time $t$, the user randomly chooses one of the sources and requests its latest generated message. 

The privacy constraint is information theoretic: the user wants to leak zero information about the identity of the sources in which he is interested in at each time $t$ when the privacy is ON. The main challenge stems from the fact that the user's requests are not independent. As in the previous example, we model the dependence between these requests by an $N$-state Markov chain. The goal is to design an ON-OFF privacy scheme with the maximum download rate that satisfies the user's request and guarantees the privacy of the requests made when privacy is ON.

 Our main contribution is to generalize the study of ON-OFF privacy in \cite{Naim_2019}, which focused on $N=2$ sources and  privacy being switched from ON to OFF once, to any number $N$ of sources and any ON-OFF privacy pattern. We give general  outer  and inner bounds on the 
 download rate in Theorems~\ref{theorem:outer} and \ref{theorem:inner},  respectively. We also devise an efficient algorithm to construct an ON-OFF privacy scheme achieving the inner bound.  
 We recover the optimality of the achievable scheme for $N=2$, which was proven in \cite{Naim_2019}. For $N > 2$, finding tighter outer bounds and efficient constructions of ON-OFF privacy schemes that would achieve them remains an open question.

 
The rest of the paper is organized as follows. In Section~\ref{sec:formulation}, we describe the formulation of the ON-OFF privacy problem. We present our main results in Section~\ref{sec:main}. The proof of the converse and achievability will be given in Section~\ref{section:outer} and \ref{section:inner}, respectively. A computational perspective will be discussed in Section~\ref{sec:LP}, and the optimality for $N=2$ sources will be discussed in Section~\ref{sec:proof_for_2}. 



\section{Problem Formulation and notation}
\label{sec:formulation}

\subsection{Setting}
\label{sec:setting}
A single server stores $N$ information sources  $\{\sW_i: i\in \cN\}$, where 
 $\cN:=\{1,2,\ldots,N\}$. The system is time-varying, and the time index $t$ is assumed to be discrete throughout this paper, \Ie $t \in \mathbb{N}$.
Without loss of generality, we assume that each source 
$\sW_i$ 
generates a message $W_{i,t}$ consisting of $L$ symbols at each time $t$, independently and identically according to the uniform distribution over $\{0,1\}^{L}$. Such that $\{W_{i,t}: i \in \cN, t \in \mathbb{N}\}$ are mutually independent, \Ie
\begin{equation}
 	H\left(W_{i,t}: i \in \cN, t \in \mathbb{N} \right) = \sum_{i,t} H\left(W_{i,t}\right),
 \end{equation}
and
 \begin{equation}
	H\left(W_{i,t}\right) = L \quad \forall i \in \cN, t\in \mathbb{N}.
\end{equation}

At each time $t$, the user is interested in retrieving the latest message generated by a desired source, \Ie one of the messages from $\{W_{i,t}: i \in \cN \}$. In particular, let $X_t$ be the source of interest at time $t$, which takes values in $\mathcal{N}$. In the sequel, we will call $X_t$ the \emph{user's request} at time $t$. 
Since the user is always interested in the latest message generated at time $t$, we slightly abuse the notation by dropping $t$ from $W_{i,t}$ when the time index $t$ is clear in the context, \Ie $W_{i,t}$ will be written as $W_i$ and we may write the retrieved message as $W_{X_t}$.

As mentioned previously, we are particularly interested in the case where the requests $X_t$, for $t \in \mathbb{N}$, form a time-invariant Markov chain, \Ie $\{X_t: t \in \mathbb{N}\}$ is generated by a Markov source $\sX$. The transition matrix $P$ of the Markov chain is known by both the server and the user, and the transition probability from state $i$ to state $j$ is denoted by $P_{i,j}$. We also denote the initial probability distribution of the Markov chain by $\pi_{0}$.

The user may or may not wish to hide the identity of his source of interest at time $t$. Specifically, the \emph{privacy status} $F_t$ at time $t$ can be either ON or OFF, where $F_t$ is ON when the user wishes to keep $X_t$ private, and $F_t$ is OFF when the user is not concerned with privacy. Denote $\cF=\{\text{ON},\text{OFF}\}$. We assume that the privacy status $\{F_t:t \in \mathbb{N}\}$ is generated by some information source $\sF$ that is independent of the user's requests  $\{X_t:t \in \mathbb{N}\}$. We also assume that at time $t$, $\{F_i:i \leq t\}$ is known by both the server\footnote{
It is worth noting that in our formulation  we are not interested in hiding the privacy status from the server.} and the user, for all $t\in\mathbb{N}$. 
For the ease of notation, we assume that $F_0=\text{ON}$.

\begin{figure}[t]
    \centering
        \input{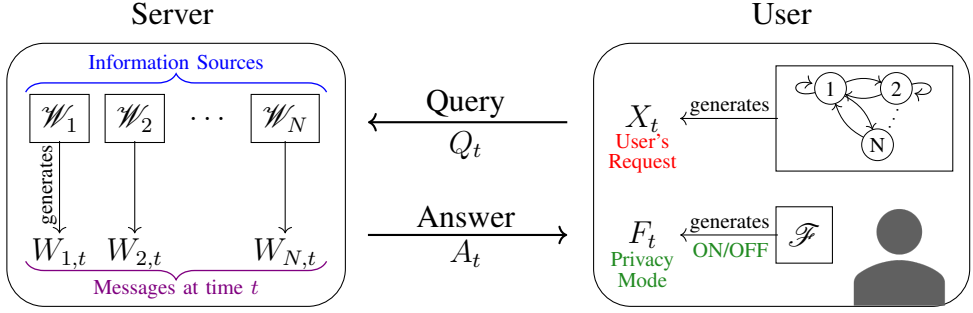}
    \caption{Setting at time $t$ as described in Section~\ref{sec:setting}. The server stores messages $W_{1,t},\dots,W_{N,t}$ generated by information sources $\sW_1,\dots,\sW_N$, respectively. The user sends a query $Q_t$, which may be a function of all previously generated requests $\{X_i: i \leq t\}$ and privacy status $\{F_i: i \leq t\}$.
    Finally, the server replies with the answer $A_t$, which is a function of $W_{1,t},\dots,W_{N,t}$.}
    \label{fig:setting}
\end{figure}

As discussed in Section~\ref{sec:introduction}, if the user downloads the desired message at time $t$ when the privacy is OFF, the privacy in the past may be compromised. 
To ensure privacy, the user is allowed to generate unlimited local randomness and we are not interested in the amount of randomness used in this paper. The local randomness $S_t$ for $t \in \mathbb{N}$ are assumed to take values in a common alphabet $\cS$.

In this paper, we only consider a \emph{causal} system. Specifically, at time $t$, the user may utilize the \emph{causal} information, \Ie all the previous and current requests $\{X_i: i \leq t \}$, previous and current privacy status $\{F_i: i \leq t \}$, and the previously generated randomness $\{S_i: i < t \}$,
to construct a query $Q_t$, and sends to the server. In other words, the randomness $S_t$ may be generated  according to $\{X_i: i \leq t \}$, $\{F_i: i \leq t \}$ and $\{S_i: i < t \}$, \Ie 
\begin{equation}
\label{eq:formulation-St}
    S_t \sim p_{X_{[t]},F_{[t]},S_{[t-1]}},
\end{equation}
where $[t]:=\{0,1,\ldots,t\}$ and $X_{[t]}:=\{X_i: i=0,1,\ldots,t\}$. Note that \eqref{eq:formulation-St}  encompasses the case in which the current query also depends on the previous queries, since they are also functions of $\{X_i: i \leq t \}$, $\{F_i: i \leq t \}$ and $\{S_i: i < t \}$.

Upon receiving the query $Q_t$, the server 
responds to the request by producing the answer $A_t$ consisting of $\ell\left(Q_t\right)$ symbols, where $A_t$ is a function of $Q_t$ and messages $\left\{W_{i,t}:i=1,\ldots,N\right\}$, and the length of $A_t$ is a function of the query $Q_t$ received.
Thus, the average length of the answer $A_t$ is given by
\begin{equation}
  \ell_{t} = \mathbb{E}_{Q_{t}} [\ell\left(Q_t\right)].
\end{equation}
It is worth noting that $Q_t$ should be dependent of the initial distribution $\pi_0$ of the Markov chain. However, since the discussion in the sequel holds for any $\pi_0$, we drop it here for ease of notation. We can see that $\ell_{t}$ is well defined for any $\pi_0$ because $\ell\left(Q_t\right)$ is trivially bounded by $N\,L$, that is downloading all $N$ messages.

\subsection{Encoding and Decoding Functions}
\begin{definition}
	An  $(N, \sX, \sF)$ causal ON-OFF privacy system consists of the following encoding and decoding functions:
\begin{itemize}
	\item Query encoding function: 
	\[\rho_t: \cN^{t} \times \cF^{t}  \times \cS^{t}  \rightarrow \cQ, \ \  t =0,1,2,\ldots,\]
	where $\rho_t$ maps all previous (including current) requests and privacy status, together with the local randomness, to the query at time $t$, \Ie $Q_t=\rho_t\left(X_{[t]},F_{[t]},S_{[t]} \right)$.
	\item Answer length function:
	\[\ell:\cQ \rightarrow \{0,1,\ldots,NL\}, \]
	\Ie the length of the answer at time $t$ is a deterministic function of the current query, which is independent of a particular message and not time-varying over time $t$. 
	\item Answer encoding function:
	\[\phi_t: \cQ \times \{0,1\}^{NL} \rightarrow \{0,1\}^{\ell\left(\cQ\right)}, \ \  t =0,1,2,\ldots, \]
	where $\phi_t$ maps the current query and $N$ latest messages to the answer of length $\ell(Q_t)$, \Ie $A_t=\phi_t\left(Q_t,W_{1,t},\ldots,W_{N,t}\right)$.
	\item Message decoding function:
	\[\psi_t: \{0,1\}^{\ell\left(\cQ\right)} \times \cN \times \cS  \rightarrow \{0,1\}^{L}, \ \  t =0,1,2,\ldots,\]
	where $\psi_t$ maps the received answer
	to the desired message, \Ie $\hat{W}_{X_t}=\psi_t\left(A_t,X_t,S_t\right)$.
\end{itemize}
\end{definition}
We would like to emphasize two points about the setup of the model. First, for any given causal privacy status $\{F_i:i \leq t\}$ at time $t$, the query $Q_t$ may be treated as a stochastic function of all causal requests $\{X_i:i \leq t\}$ and previous queries $\{Q_i:i < t\}$. Since we are not interested in the randomness $\{S_i:i \leq t\}$ consumed, we may not write the local randomness explicitly in the sequel. 
Second, since messages $\left\{W_{i,t}: i \in \cN\right\}$ are independent over time, at time $t$, the answer $A_t$ only depends on the latest messages $W_{1,t},\ldots,W_{N,t}$ (a given $t$). Similarly, the current query $Q_t$ is independent of previous answers $\{A_i:i < t\}$ as well.

\subsection{Privacy and Decodability}
These functions need to satisfy the decodability and the privacy constraints, defined as follows.
\begin{enumerate}
\item Decodability: For any time $t$, the user should be able to recover the desired message from the answer with zero-error probability, \Ie
	\begin{equation}
	\label{eq:decode}
		\Pb{\hat{W}_{X_t} \neq W_{X_t}} = 0.
	\end{equation}
\item Privacy: For any time $t$, given all past queries received by the server, the query $Q_t$ should not reveal any information about all the past or present requests when the privacy is ON, that is
	\begin{equation}
	\label{eq:privacy-1}
		I \left(X_{\cB_t};Q_t|Q_{[t-1]}\right) = 0, \quad \forall t \in \mathbb{N},
	\end{equation}
	where $\cB_t: = \{i: i \leq t,F_i=\text{ON}\}$. For notational simplicity, $F_0$ is assumed to be $\text{ON}$ throughout this paper, and hence $\cB_t$ is always not empty. 


\end{enumerate}
The conditioning in the privacy formulation in \eqref{eq:privacy-1}
serves to       ensure causality in the proposed achievable schemes. Barring this conditioning,    privacy could be alternatively defined by 
\begin{equation}
\label{eq:privacy-joint}
    I \left(X_{\cB_t};Q_{[t]}\right) = 0, \quad \forall t \in \mathbb{N}.
\end{equation}
However, this alternative definition implies that at any point $i<t$, the user has to know and protect  future requests $\left\{X_j: j=i+1,\ldots,t, F_j=\text{ON} \right\}$, since \eqref{eq:privacy-joint} implies that
\[I \left(X_{\cB_t\backslash[i]};Q_{i} \right) = 0, \]
which contradicts  the causality of the system.

Given the definition of the privacy, we introduce the following proposition, which is a direct but useful consequence of the Markov assumption of the requests and the privacy definition and whose proof can be found in Appendix~\ref{appendix:proposition-achieve-markov}. 
\begin{proposition}
\label{proposition:achieve-markov}
If $X_{\tau}$ is independent of $Q_t$ conditioning on $Q_{[t-1]}$, then $X_{\cB_t}$ is independent of $Q_t$ conditioning on $Q_{[t-1]}$.
\end{proposition}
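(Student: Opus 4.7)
The plan is to bootstrap the single-index conditional independence stated in the hypothesis into the joint independence stated in the conclusion, by combining the chain rule of mutual information with the Markov structure of $\sX$ and the causal nature of the query scheme.

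First I would split the target quantity by the chain rule of mutual information:
\begin{equation*}
I(X_{\cB_t};Q_t\mid Q_{[t-1]}) \;=\; I(X_{\tau};Q_t\mid Q_{[t-1]}) \;+\; I(X_{\cB_t\setminus\{\tau\}};Q_t\mid Q_{[t-1]},X_{\tau}).
\end{equation*}
The first term vanishes by the hypothesis of the proposition, so the task reduces to showing that the second term is zero, equivalently that $Q_t$ is conditionally independent of $X_{\cB_t\setminus\{\tau\}}$ given $(Q_{[t-1]},X_{\tau})$.

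Next I would establish this remaining conditional independence by exploiting two structural facts. On the one hand, the Markov property of $\sX$ implies that, conditionally on $X_{\tau}$, the requests strictly before $\tau$ are independent of the requests strictly after $\tau$; picking $\tau$ as a natural reference time (for instance $\tau=\max\cB_t$) puts every other index of $\cB_t$ on one side of this split. On the other hand, the causal encoding expresses each $Q_i$ as a stochastic function of $(X_{[i]},F_{[i]},S_{[i-1]})$ with local randomness that is independent of the requests. Chaining these two facts through a direct factorization of $p(X_{[t]},Q_{[t]}\mid Q_{[t-1]},X_{\tau})$, or equivalently a d-separation argument in the induced Bayesian network, would let me conclude that once $(Q_{[t-1]},X_{\tau})$ is fixed, no residual information path remains from $X_{\cB_t\setminus\{\tau\}}$ to $Q_t$.

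The main obstacle I anticipate is the collider bias created by conditioning on $Q_{[t-1]}$: since each $Q_i$ aggregates past requests together with randomness, a naive appeal to the Markov property does not immediately screen off the past requests from $Q_t$. Overcoming this will require using the hypothesis itself, which already encodes a nontrivial conditional independence, together with the fact that the causal Markovian structure of the scheme propagates that independence consistently, so that the only surviving channel through which $X_{\cB_t\setminus\{\tau\}}$ could influence $Q_t$ is the one blocked by $X_{\tau}$.
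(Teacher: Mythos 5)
Your chain-rule decomposition matches the paper's, and handling the first term via the hypothesis is right. The gap is in the second term, $I(X_{\cB_t\setminus\{\tau\}};Q_t\mid Q_{[t-1]},X_{\tau})$. You argue from the fully general causal encoder $Q_t=\rho_t(X_{[t]},F_{[t]},S_{[t]})$ together with the Markov property of $\{X_t\}$. But in the Bayesian network for that general encoder, $Q_t$ has a direct edge from every $X_i$ with $i\leq t$, in particular from every $i\in\cB_t\setminus\{\tau\}$, and no choice of conditioning set blocks a parent-child edge. So the d-separation claim ``no residual information path remains from $X_{\cB_t\setminus\{\tau\}}$ to $Q_t$'' does not follow at this level of generality, and the Markov property of the requests cannot rescue it. The decisive obstruction is this direct dependence, not the collider issue you flag.

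The paper's proof avoids this by explicitly invoking the structural restriction of the achievable scheme under study, which your outline never states: there $Q_t$ is a stochastic function of $(X_\tau,X_t,Q_{[t-1]})$ only, with $S_t$ drawn from $p_{X_t,X_\tau,Q_{[t-1]}}$. Data processing then gives $I(X_{\cB_t\setminus\{\tau\}};Q_t\mid X_\tau,Q_{[t-1]})\leq I(X_{\cB_t\setminus\{\tau\}};X_t,S_t\mid X_\tau,Q_{[t-1]})$; the $S_t$ term drops because $S_t$ is conditionally independent of the earlier requests given $(X_t,X_\tau,Q_{[t-1]})$; and the remainder $I(X_{\cB_t\setminus\{\tau\}};X_t\mid X_\tau,Q_{[t-1]})$ vanishes by the Markov property. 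Without the encoder restriction the chain is cut. Two smaller inaccuracies compound the gap: the local randomness is not ``independent of the requests'' --- by \eqref{eq:formulation-St} it is drawn from $p_{X_{[t]},F_{[t]},S_{[t-1]}}$, and only the conditional independence of $S_t$ from $X_{\cB_t\setminus\{\tau\}}$ given $(X_t,X_\tau,Q_{[t-1]})$ is what is actually true and needed --- and the hypothesis $I(X_\tau;Q_t\mid Q_{[t-1]})=0$ is consumed entirely by the first chain-rule term, so it cannot be re-used to repair the second one as your closing paragraph proposes.
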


By convention, at time $t$, the tuple $\ell_t$ is said to be achievable if there exists a code satisfying the decodability and the privacy constraint such that the average answer length is $\ell_t$. The efficiency of the code can be measured by the download rate $R_t=\frac{L}{\ell_t}$, and hence we define the achievable region as follows.
\begin{definition}
The rate tuple $\left(R_t:t \in \mathbb{N} \right)$ is achievable if there exists a code with  message length $L$ and average  download cost $\ell_t$ such that $R_t \leq L/\ell_t$ for all $t \in \mathbb{N}$. 
\end{definition}

We are interested in characterizing the achievable region $\left(R_t:t \in \mathbb{N} \right)$. In particular, the focus of this paper is the characterization of $R_t$ for each $t \in \mathbb{N}$.

\subsection{Notation}

We introduce some necessary notation which will be used in later sections.
Let $\tau(t)$ be the last time privacy was ON, \Ie 
\begin{equation}
\label{eq:tau-def}
	\tau(t):= \max\{i: i \leq t, F_i = \text{ON}\}=\max \,\cB_t.
\end{equation}
The time index $t$ will be clear in the  context in the following sections, so we may drop $t$ from the notation and write $\tau$ instead of $\tau(t)$ for simplicity. 
It is worth noting that $\tau(t)$ is well-defined because of the assumption that $F_0 = \text{ON}$.

For any given $x \in \cN$ and $q_{[t-1]}$, suppose that we have the following ordering of the likelihood probabilities
\begin{equation}
\label{eq:prob_order}
\begin{aligned}
& \pb{X_t=x|X_{\tau} = x_{\tau}^{(x,1)},Q_{[t-1]}=q_{[t-1]}} \leq \pb{X_t=x|X_{\tau} = x_{\tau}^{(x,2)},Q_{[t-1]}=q_{[t-1]}} \\
& ~~~~~~ \leq \cdots \leq \pb{X_t=x|X_{\tau} = x_{\tau}^{(x,N)},Q_{[t-1]}=q_{[t-1]}},
\end{aligned}
\end{equation}
where $x_{\tau}^{(x,i)}$ for $i=1,\ldots,N$ are distinct elements in $\cN$. Then, for $i=1,\ldots,N$, let
\begin{equation}
    \label{eq:def_lambda}
	\lambda_i\left(t,q_{[t-1]}\right) = \sum_{x \in \cN} \pb{X_t=x|X_{\tau} = x_{\tau}^{(x,i)},Q_{[t-1]} = q_{[t-1]}},
\end{equation}
and
\begin{equation}
    \label{eq:def_theta}
	\theta_i\left(t,q_{[t-1]}\right) = \min \left\{ 1,\lambda_i\left(t,q_{[t-1]}\right)\right\} - \min \left\{ 1,\lambda_{i-1}\left(t,q_{[t-1]}\right) \right\}, 
\end{equation}
where $\lambda_{0}\left(t,q_{[t-1]}\right)$ is assumed to be $0$. For notational simplicity, we will also write $\lambda_i\left(t,q_{[t-1]}\right)$ by $\lambda_i\left(q_{[t-1]}\right)$ and $\theta_i\left(t,q_{[t-1]}\right)$ by $\theta_i\left(q_{[t-1]}\right)$  when the time index $t$ is clear in the  context.

Moreover, we will use $\sP\left(\cN\right)$ to denote the power set of $\cN$, and $\mathbb{E}[X]$ to denote the expected value of a random variable $X$. 
We summarize some definitions and nomenclature in Table~\ref{table:nomenclature}.


\begin{table}[t]
\centering
\small

\begin{tabular}{ l|l}

Symbol & Definition\\
\toprule
$N$ & number of sources \\[0.5ex]

$\cN$ & $\{1,2,\ldots,N\}$  \\[0.5ex]

$[t]$ & $\{0,1,\ldots,t\}$ for any $t \in \mathbb{N}$  \\[0.5ex]
 
 
$\sP\left(\cN\right)$ & power set of $\cN$\\[0.5ex]


$W_{i,t}$ & message generated by $i$-th source at time $t$, where $i=1,\ldots,N$ and $t=0,1,\ldots$ \\[0.5ex]

$X_t$ & user's request at time $t$ ($X_t \in \cN $)\\[0.5ex]


$F_t$ & privacy status at time $t$, \Ie $F_t \in \{\text{ON}, \text{OFF}\}$\\[0.5ex]



 $Q_t$ & query sent by the user to the server at time $t$\\[0.5ex]
 
 $A_t$ & answer sent by the server to the user at time $t$\\[0.5ex]
 
 
$\cB_t$ & all the times privacy was ON , \Ie $\cB_t = \{i: i \leq t,F_i=\text{ON}\}$  \\[0.5ex]
 
$\tau(t)$ & last time privacy was ON, \Ie  $\tau(t) = \max \cB_t$ \\[0.5ex]
 
 $\ell_{t}$ & average length of the answer $A_t$\\[0.5ex]
 
 $R_t$ & download rate at time $t$\\[0.5ex]

$\lambda_i\left(q_{[t-1]}\right)$
&  the summation of $i$-th minimal likelihood probabilities (of $x_{\tau}$) provided the observation $x_t$ for given $q_{[t-1]}$
\\[0.5ex]


\toprule

\end{tabular}
\caption{Nomenclature and definitions}
\label{table:nomenclature}
\end{table}

\section{Main results}
\label{sec:main}
In this section, we present the main results of this paper, \Ie inner and outer bounds for the achievable region $\left(R_t:t \in \mathbb{N} \right)$.

The following theorem gives an outer bound on the achievable rate, and the proof can be found in Section~\ref{section:outer}.
\begin{theorem}(Outer bound 1)
	\label{theorem:outer}
	The rate tuple $\left(R_t:t \in \mathbb{N} \right)$ must satisfy 
	\begin{equation}
	\label{eq:theorem-outer}
		\frac{1}{R_t} \geq \sum_{q_{[t-1]}} \pb{q_{[t-1]}} \sum_{x_t} \max_{x_{\tau}} \pb{x_t|x_{\tau},q_{[t-1]}},
	\end{equation}
	where $\tau = \max\{i: i \leq t, F_i = \text{ON}\}$. 
\end{theorem}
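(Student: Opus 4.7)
The plan is to lower bound $1/R_t = \mathbb{E}[\ell(Q_t)]/L$ by first conditioning on the past queries $Q_{[t-1]} = q_{[t-1]}$ and, for each realization $q_t$, identifying the set of requests that must remain decodable from $A_t$. Concretely, define the ``compatibility set''
\begin{equation*}
\cD(q_t, q_{[t-1]}) := \left\{ x \in \cN : \pb{X_t = x, Q_t = q_t \mid Q_{[t-1]} = q_{[t-1]}} > 0 \right\}.
\end{equation*}
The core step is the decodability inequality $\ell(q_t) \geq |\cD(q_t, q_{[t-1]})| \cdot L$, which I expect to be the most delicate part of the argument.

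To prove this inequality, fix $(q_t, q_{[t-1]})$ with positive probability. For each $x \in \cD(q_t, q_{[t-1]})$, choose a realization $s_t(x)$ with $\pb{X_t = x, S_t = s_t(x), Q_t = q_t \mid Q_{[t-1]} = q_{[t-1]}} > 0$. Because the messages $W_{1,t}, \dots, W_{N,t}$ are i.i.d.\ uniform on $\{0,1\}^L$ and independent of $(X_{[t]}, F_{[t]}, S_{[t]})$, zero-error decodability \eqref{eq:decode} forces the algebraic identity
\begin{equation*}
\psi_t\bigl(\phi_t(q_t, W_{1,t}, \dots, W_{N,t}), x, s_t(x)\bigr) = W_{x,t}
\end{equation*}
to hold for \emph{every} value of $(W_{1,t}, \dots, W_{N,t})$. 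Hence the tuple $(W_{x,t} : x \in \cD)$ is a deterministic function of $A_t = \phi_t(q_t, W_{1,t}, \dots, W_{N,t})$, and since $A_t$ is a binary string of length $\ell(q_t)$,
\begin{equation*}
\ell(q_t) \geq H\bigl(A_t \mid Q_t = q_t, Q_{[t-1]} = q_{[t-1]}\bigr) \geq H(W_{x,t} : x \in \cD) = |\cD(q_t, q_{[t-1]})| \cdot L.
\end{equation*}

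Taking expectation over $Q_t$ given $Q_{[t-1]} = q_{[t-1]}$ and swapping the order of summation yields
\begin{equation*}
\mathbb{E}[\ell(Q_t) \mid q_{[t-1]}]/L \;\geq\; \sum_{x_t \in \cN}\, \sum_{q_t \,:\, p(x_t, q_t \mid q_{[t-1]}) > 0} \pb{q_t \mid q_{[t-1]}}.
\end{equation*}
To match \eqref{eq:theorem-outer}, I invoke privacy: because $\tau \in \cB_t$, the vanishing of $I(X_{\cB_t}; Q_t \mid Q_{[t-1]})$ in \eqref{eq:privacy-1} implies $I(X_\tau; Q_t \mid Q_{[t-1]}) = 0$, so $\pb{q_t \mid x_\tau, q_{[t-1]}} = \pb{q_t \mid q_{[t-1]}}$ for every $x_\tau$ with $p(x_\tau \mid q_{[t-1]}) > 0$. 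Expanding by the chain rule,
\begin{equation*}
\pb{x_t \mid x_\tau, q_{[t-1]}} = \sum_{q_t} \pb{q_t \mid q_{[t-1]}}\, \pb{x_t \mid q_t, x_\tau, q_{[t-1]}} \leq \sum_{q_t \,:\, p(x_t, q_t \mid q_{[t-1]}) > 0} \pb{q_t \mid q_{[t-1]}},
\end{equation*}
using $\pb{x_t \mid q_t, x_\tau, q_{[t-1]}} \leq 1$ and that this conditional vanishes outside the indicated index set. Taking the maximum over $x_\tau$, summing over $x_t$, and finally averaging over $q_{[t-1]}$ with weight $\pb{q_{[t-1]}}$ produces \eqref{eq:theorem-outer}.

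The hardest step is the decodability inequality: one must argue that, although the decoder may exploit different realizations of $S_t$ to recover distinct $W_{x,t}$'s from the same $A_t$, the answer itself must still encode every such message simultaneously, so that its length is at least $|\cD|\cdot L$. Once this is in place, the remaining steps are routine manipulations combining the privacy constraint with the law of total probability.
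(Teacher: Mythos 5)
Your proof is correct and takes essentially the same route as the paper's: the paper introduces an auxiliary random variable $Y_t$ (the set of messages decodable from $A_t$, a function of $Q_t$ only) to get $\ell(Q_t)\geq|Y_t|L$, then proves Lemma~\ref{lemma:lower} which, via the identical chain-rule/independence manipulation you carry out inline, yields $\mathbb{E}[|Y_t|\mid q_{[t-1]}]\geq\sum_{x_t}\max_{x_\tau}\pb{x_t|x_\tau,q_{[t-1]}}$. Your compatibility set $\cD(q_t,q_{[t-1]})$ is a (sub)set of the paper's $Y_t(q_t)$, and your explicit entropy argument for $\ell(q_t)\geq|\cD|L$ fills in a step the paper treats as immediate; the rest of the algebra coincides term for term.
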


It is worth noting that the right-hand side of \eqref{eq:theorem-outer} encompasses the previous queries, where the optimal previous queries maximizing the download rate for the current time instance are implicit, and hence the bound in \eqref{eq:theorem-outer} is generally hard to compute. Nevertheless, we can use the bound in \eqref{eq:theorem-outer} to derive the following corollary, which only involves the transition probabilities of the Markov chain and not the previous queries.

\begin{corollary}(Outer Bound 2)
	\label{corollary:outer}
	The rate tuple $\left(R_t:t \in \mathbb{N} \right)$ must satisfy 
	\begin{equation}
	\label{eq:corollary}
		\frac{1}{R_t} \geq \sum_{x_t} \max_{x_{\tau}} \pb{x_t|x_{\tau}},
	\end{equation}
	 where $\tau = \max\{i: i \leq t, F_i = \text{ON}\}$. 
\end{corollary}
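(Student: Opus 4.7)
The plan is to derive Corollary~\ref{corollary:outer} from Theorem~\ref{theorem:outer} by proving, for each $x_t\in\cN$ separately, the term-wise inequality
\begin{equation*}
\sum_{q_{[t-1]}} \pb{q_{[t-1]}}\,\max_{x_\tau}\pb{x_t|x_\tau,q_{[t-1]}} \;\geq\; \max_{x_\tau}\pb{x_t|x_\tau},
\end{equation*}
and then summing over $x_t$. I would split the past queries into a prefix and a suffix, $Q_{[t-1]}=(Q_{[\tau-1]},Q_\tau,\ldots,Q_{t-1})$, and exploit two complementary structural facts that act on the two halves.

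First, privacy gives a conditional independence for the suffix. Since $\tau\in\cB_j$ for every $j\in\{\tau,\ldots,t-1\}$, the per-time constraint $I(X_{\cB_j};Q_j\mid Q_{[j-1]})=0$ implies $I(X_\tau;Q_j\mid Q_{[j-1]})=0$ for each such $j$; applying the chain rule of mutual information yields $X_\tau \perp (Q_\tau,\ldots,Q_{t-1}) \mid Q_{[\tau-1]}$. Second, the Markov property of the request process eliminates the prefix from the conditioning on $X_t$: given $X_\tau$, the request $X_t$ depends only on the forward Markov randomness that drives transitions after time $\tau$, and this randomness is independent of $(X_0,\ldots,X_{\tau-1})$ and of the local randomness $S_{[\tau-1]}$; hence $X_t \perp Q_{[\tau-1]}\mid X_\tau$, and consequently $\pb{x_t|x_\tau,q_{[\tau-1]}}=\pb{x_t|x_\tau}$.

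With both facts in hand, I would fix $q_{[\tau-1]}$ and apply Jensen's inequality (convexity of the maximum):
\begin{equation*}
\mathbb{E}\!\left[\max_{x_\tau}\pb{x_t|x_\tau,Q_{[t-1]}}\,\big|\,q_{[\tau-1]}\right] \;\ge\; \max_{x_\tau}\mathbb{E}\!\left[\pb{x_t|x_\tau,Q_{[t-1]}}\,\big|\,q_{[\tau-1]}\right].
\end{equation*}
By the privacy-based independence, the law of $(Q_\tau,\ldots,Q_{t-1})$ given $Q_{[\tau-1]}$ coincides with its law given $(X_\tau,Q_{[\tau-1]})$, so the inner conditional expectation on the right collapses via the tower property to $\pb{x_t|x_\tau,q_{[\tau-1]}}$, which by the Markov fact equals $\pb{x_t|x_\tau}$. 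The right-hand side is therefore the constant $\max_{x_\tau}\pb{x_t|x_\tau}$ in $q_{[\tau-1]}$. Averaging over $Q_{[\tau-1]}$ and applying the tower property on the left recombines the halves and yields the term-wise inequality, after which summing over $x_t$ and substituting into Theorem~\ref{theorem:outer} completes the proof.

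The main obstacle is that privacy alone does \emph{not} make $X_\tau$ independent of the entire past $Q_{[t-1]}$; it only gives conditional independence given the prefix $Q_{[\tau-1]}$. The Markov structure is therefore needed as a separate tool to wipe the prefix out of the conditioning on $X_t$. Once the split $Q_{[t-1]}=(Q_{[\tau-1]},Q_\tau,\ldots,Q_{t-1})$ is made, privacy handles the suffix via Jensen and Markov handles the prefix via a direct equality, and the two steps compose cleanly.
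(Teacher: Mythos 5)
Your proof is correct and uses the same three ingredients as the paper's proof in Appendix~B---Jensen's inequality (convexity of $\max$), the privacy constraint turned into conditional independence of $X_\tau$ from queries after $\tau$, and the Markov structure $Q_{[\tau]} \to X_\tau \to X_t$. The only organizational difference is that the paper peels off the conditioning one query at a time in an iterative descent from $q_{[t-1]}$ to $q_{[\tau]}$ (applying privacy at each step $i-1$), whereas you apply Jensen once to the whole suffix $(Q_\tau,\dots,Q_{t-1})$ after first assembling the joint conditional independence via the chain rule of mutual information; your unrolled version is logically equivalent and arguably cleaner.
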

\begin{proof}
See Appendix~\ref{appendix:corollary-outer}.
\end{proof}


The following theorem gives an inner bound on the rate, and the detailed description of the achievable scheme will be discussed in Section \ref{sec:proof_inner}. 
\begin{theorem}(Inner bound)
	\label{theorem:inner}
	The rate tuple $\left(R_t:t \in \mathbb{N} \right)$ is achievable if 
	\begin{equation}
	\label{eq:theorem-inner}
	 	\frac{1}{R_t} \geq \sum_{q_{[t-1]}} \pb{q_{[t-1]}} \sum_{i=1}^{N} i \, \theta_i(q_{[t-1]}).
	 \end{equation} 
\end{theorem}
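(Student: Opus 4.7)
The plan is to construct, for every realization $q_{[t-1]}$ of the past queries, an explicit conditional law $p(Q_t \mid X_t, X_\tau, q_{[t-1]})$ that (i) always puts $X_t$ into $Q_t$ (decodability), (ii) when marginalized over $X_t$ becomes independent of $X_\tau$ (privacy against $X_\tau$), and (iii) has expected query cardinality exactly $\sum_{i=1}^N i\,\theta_i(q_{[t-1]})$. Since the answer length is the query cardinality times $L$, averaging (iii) with weights $p(q_{[t-1]})$ yields $1/R_t$ equal to the right-hand side of \eqref{eq:theorem-inner}. Moreover, Proposition~\ref{proposition:achieve-markov} upgrades independence from $X_\tau$ to independence from the full vector $X_{\cB_t}$, so the privacy condition \eqref{eq:privacy-1} follows automatically.

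Fix $q_{[t-1]}$ and drop it from notation. Abbreviate $g_i(x) := p(X_t = x \mid X_\tau = x_\tau^{(x,i)})$; by \eqref{eq:prob_order}, $g_1(x) \le g_2(x) \le \cdots \le g_N(x)$, and $\lambda_i = \sum_x g_i(x)$. The scheme is layered: sample an independent layer index $I \in \{1,\ldots,N\}$ with distribution $(\theta_1,\ldots,\theta_N)$, and conditional on $I = i$ emit a subset $Q_t \subseteq \cN$ of size $i$ containing $X_t$. Because $\theta_i$ is the same for every $x_\tau$ and the query size given the layer equals $i$, the expected cardinality is exactly $\sum_i i\,\theta_i$, giving (iii).

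For the layer-$1$ allocation, given $X_t = x$ I would output the singleton $\{x\}$ with probability $g_1(x)/p(X_t = x \mid X_\tau)$, which lies in $[0,1]$ because $g_1(x) = \min_{x_\tau} p(X_t = x \mid X_\tau = x_\tau)$. A direct calculation gives $p(Q_t = \{x\} \mid X_\tau = x_\tau) = g_1(x)$, independent of $x_\tau$, and the total layer-$1$ mass is $\lambda_1 = \theta_1$ (noting $\lambda_1 \le 1$ since the sum of row-minima is bounded by any row sum). For each $i \ge 2$, the marginal increment per source is $g_i(x) - g_{i-1}(x) \ge 0$, summing to $\theta_i$; this mass is distributed over size-$i$ subsets that contain $x$, with the choice of subset dictated by the rank of $X_\tau$ in the ordering $x_\tau^{(x,1)},\ldots,x_\tau^{(x,i)}$, so that the marginal $p(Q_t = q \mid X_\tau = x_\tau)$ depends on $q$ alone. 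The truncation $\min\{1, \lambda_i\}$ inside $\theta_i$ handles the regime where cumulative mass would exceed $1$: once saturation occurs the residual probability is simply packed into $\cN$ itself.

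The main obstacle is the consistency of step (ii) at layers $i \ge 2$: one must select size-$i$ subsets and weights so that, after summing over $X_t$, the marginal on every subset $q$ is $x_\tau$-free. I would resolve this by a greedy bookkeeping argument driven by the rank ordering \eqref{eq:prob_order}, matching each ``demand'' $g_i(x) - g_{i-1}(x)$ to a size-$i$ query formed from the top-$i$ ranked $X_\tau$ values together with $x$; the telescoping identity $\sum_i \bigl(g_i(x) - g_{i-1}(x)\bigr) = g_N(x) = \max_{x_\tau} p(X_t = x \mid X_\tau = x_\tau)$ guarantees the procedure exhausts exactly the right amount of conditional mass for each $x_\tau$. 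The explicit algorithmic implementation and verification of feasibility form the content of Section~\ref{section:inner}.
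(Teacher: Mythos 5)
Your high-level architecture matches the paper's: use uncoded queries, fix $q_{[t-1]}$ and construct a conditional law $w(q_t \mid x_t, x_\tau, q_{[t-1]})$, invoke Proposition~\ref{proposition:achieve-markov} to upgrade $X_\tau$-privacy to $X_{\cB_t}$-privacy, then average over $q_{[t-1]}$ and conclude via $\ell(Q_t)=|Q_t|L$. That framing is correct. However, the core of the proof — the existence of an admissible conditional law achieving $\mathbb{E}[|Q_t|\mid q_{[t-1]}] \le \sum_i i\,\theta_i(q_{[t-1]})$ — is exactly the paper's Lemma~\ref{lemma:algorithm}, and your sketch of it does not actually close the argument.

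Two concrete problems. First, the ``top-$i$ ranked $X_\tau$ values together with $x$'' rule does not produce a set of size $i$ in general: $x$ can coincide with one of $x_\tau^{(x,1)},\ldots,x_\tau^{(x,i-1)}$ (indeed, $x_\tau^{(x,1)}=x$ is common for lazy Markov chains), so the union collapses. The paper avoids this by first building a random \emph{multiset} $Z$ with $\Pr(|Z|=i)=\theta_i$ and then taking $Y=\mathrm{Set}(Z)$, which is why the final bound is an inequality $\mathbb{E}[|Y|]\le\sum_i i\,\theta_i$ rather than the equality you claim. Second, and more fundamentally, the per-$x$ telescoping identity $\sum_i (g_i(x)-g_{i-1}(x)) = g_N(x)$ does not enforce the budget constraints $\sum_{q} p(q, X_t=x \mid X_\tau=x_\tau) = p(x\mid x_\tau)$. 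After your layer-$1$ allocation, the pair $(x, x_\tau^{(x,1)})$ has its entire budget $g_1(x)$ consumed. At layer $2$ the demand $g_2(x)-g_1(x)$ is strictly positive for \emph{every} value of $x_\tau$ if the column sum $p(Q_t=q\mid x_\tau)$ is to be $x_\tau$-free, yet for $x_\tau=x_\tau^{(x,1)}$ this mass cannot come from $X_t=x$ and must be routed through a different request value $x'$. Which $x'$, and into which size-$2$ set, is precisely the feasibility question; in the paper's $N=3$ example, $\hat q=\{1,2\}$ is rejected and $\hat q=\{1,3\}$ chosen exactly because of such budget collisions. Your ``greedy bookkeeping argument driven by the rank ordering'' names the obstacle but does not resolve it — the paper resolves it with an explicit iterative algorithm (Section~\ref{section:inner} / Algorithm~\ref{alg:algorithm}) whose feasibility, in particular the invariant $\sum_k Q_{u,k}\ge$ (right-hand side of \eqref{eq:alg-sum}), is verified in Appendix~\ref{app:proof-lemma-achievability}. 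Until you supply an equivalent feasibility argument, the construction is unproven.
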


We give the following example to illustrate the outer and inner bounds described in Theorem \ref{theorem:outer} and Theorem \ref{theorem:inner}, respectively.




\begin{example}
\label{ex:symmetric_markov}
Consider a symmetric Markov chain with transition matrix $P$ given by
\begin{equation}
    P_{i,j} =
    \begin{cases}
        \alpha, & \quad \text{if} \quad i=j,\\
        \frac{1-\alpha}{N-1}, & \quad \text{if} \quad i\neq j,
    \end{cases}
\end{equation}
where $0 \leq \alpha \leq 1$ and $P_{i,j}$ denotes 
the transition probability from state $i$ to state $j$. 

Suppose we are given $\tau=0$, \Ie privacy was ON at $t=0$, and privacy is OFF at $t=1$.   
Following a direct application of \eqref{eq:theorem-outer} and \eqref{eq:theorem-inner} for $t=1$, we have two regimes: $\alpha<\frac{1}{N}$ and $\alpha\geq\frac{1}{N}$. This is because the ordering of probabilities (c.f.\eqref{eq:prob_order}) changes at $\alpha = \frac{1}{N}$.

For $\alpha \geq \frac{1}{N}$, the bounds 
\eqref{eq:theorem-outer} and \eqref{eq:theorem-inner} match, \Ie the rate at $t=1$ is achievable if and only if
\begin{equation}
\label{eq:exp-1}
  \frac{1}{R_1}\geq N\alpha.
\end{equation}
As for $\alpha<\frac{1}{N}$, we have that 
\begin{equation}
\label{eq:exp-2}
    \frac{1}{R_1^O}:=\frac{N(1-\alpha)}{N-1}\leq \frac{1}{R_1} \leq 2-N\alpha:=\frac{1}{R_1^I}.
\end{equation}

We illustrate \eqref{eq:exp-1} and \eqref{eq:exp-2} for a three state symmetric Markov chain, i.e., $N=3$, with more details in  
Figure~\ref{fig:example_3states_sym}.
\end{example}
\begin{figure*}[t]
     \centering
     \begin{subfigure}[l]{0.45\textwidth}
         \centering
          \scalebox{0.9}{                  \begin{tikzpicture}[thick,scale=1, every node/.style={scale=1}]
 \tikzstyle{every node}=[font=\normalsize]
        \tikzset{node style/.style={state, 
        							inner sep=0.5pt,
                                    minimum size = 25pt,
                                    line width=0.2mm,
                                    fill=white},
                    LabelStyle/.style = { minimum width = 1em, fill = white!10,
                                            text = black},
                   EdgeStyle/.append style = {->, bend left=18, line width=0.2mm} }

        \node at (6.5,9) [rectangle,minimum width=1cm, minimum height=4.8cm] {};
        
        \node at (6.5,8.6) [line width=0.1pt, rectangle,draw,minimum width=6.3cm, minimum height=4.55cm,color=gray] {};

        \node[node style] at (5,10)     (1)     {1};
        \node[node style] at (8, 10)     (2)     {2};
		\node[node style] at (6.5, 8)     (3)     {3};

            \Edge[label=$\frac{1-\alpha}{2}$](1)(2)
  			\Edge (2)(1)
  			\Edge (1)(3)
  			\Edge[label=$\frac{1-\alpha}{2}$](3)(1)
  			\Edge (3)(2)
  			\Edge[label=$\frac{1-\alpha}{2}$](2)(3)
            
            \Loop[dist = 1.5cm, dir = NO, label = $\alpha$](1.west)
            \Loop[dist = 1.5cm, dir = SO, label = $\alpha$](2.east)           
            \Loop[dist = 1.5cm, dir = WE, label = $\alpha$](3.south)     
            
            \node at (6.5,6.8) [rectangle,minimum width=1cm, minimum height=3.7cm] {};
         
    \end{tikzpicture}
         }
         \caption{\centering A Symmetric Markov Chain.}
         \label{fig:example_3states_chain}
     \end{subfigure}
     \begin{subfigure}[l]{0.45\textwidth}
         \centering
         \definecolor{mycolor1}{rgb}{0.00000,0.44700,0.74100}%
\definecolor{mycolor2}{rgb}{0.85000,0.32500,0.09800}%
\definecolor{mycolor3}{rgb}{0.92900,0.69400,0.12500}%
\definecolor{mycolor4}{rgb}{0.49400,0.18400,0.55600}%
\definecolor{darkred}{rgb}{0.64, 0.0, 0.0}

\begin{tikzpicture}
\begin{axis}[width=5cm,
height=4.5cm,
scale only axis,
xmin=0,
xmax=1,
ymin=0.3,
ymax=1,
xlabel=$\alpha$,
axis background/.style={fill=white},
legend style={legend cell align=left, align=left, draw=none}
]

\addplot[smooth,mark=none,line width=1.5pt,domain=0.001:0.33,
samples=5,
color=mycolor1,forget plot, mark=triangle*, mark options={solid, rotate=270, fill=mycolor1, mycolor1}]  {1/(2-3*x)};

\addplot[smooth,mark=none,line width=1.5pt,domain=0.33:0.999,
samples=10,
color=mycolor1, mark=triangle*, mark options={solid, rotate=270, fill=mycolor1, mycolor1}]  {1/(3*x)};
\addlegendentry{$R_t^I$}

\addplot[color=red, line width=1.5pt,domain=0.001:0.33,
samples=60, forget plot]  {2/(3-3*x)};

\addplot[color=red, line width=1.5pt,,domain=0.33:0.999,
samples=60]  {1/(3*x)};
\addlegendentry{$R_t^O$}

\end{axis}

\end{tikzpicture}
         
         \caption{\centering $R_t^I$ and $R_t^O$ as a function of $\alpha$.}
         \label{fig:example_3states_p}
     \end{subfigure}
     
        \caption{ In Figure~\ref{fig:example_3states_chain}, we graphically represent the 3-state symmetric Markov chain used in Example~\ref{ex:symmetric_markov}, where $0 \leq \alpha\leq 1$. In Figure~\ref{fig:example_3states_p}, we plot the achievable rate $R_t^I$ and the upper bound $R_t^O$ (c.f.\eqref{eq:exp-2}),
         as a function of $\alpha$, when $\tau=0$ and $t=1$.
         }
        \label{fig:example_3states_sym}
\end{figure*}
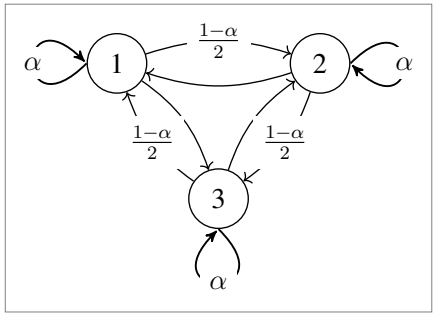
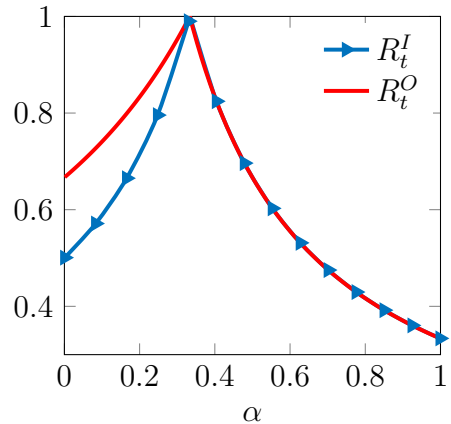

The special case when there are $N=2$ information sources was studied in \cite{Naim_2019}. For $N=2$, the outer bound \eqref{eq:theorem-outer} and inner bound \eqref{eq:theorem-inner}, presented above, match. Therefore, the proposed scheme achieves the optimal rate for $N=2$. We restate this result in Theorem~\ref{theorem:two}, where the Markov chain has two states and is defined by the probability transition matrix
\begin{equation}
 P =
\begin{bmatrix}
       1- \alpha &  \alpha         \\
       \beta  & 1- \beta 
\end{bmatrix},
\end{equation}   
such that $0 \leq \alpha, \beta \leq 1$.

\begin{theorem}(Optimality for $N=2$)
\label{theorem:two}
	For $N=2$, the rate tuple $\left(R_t:t \in \mathbb{N} \right)$ is achievable if and only if  
  \begin{equation}
  \label{eq:theorem-two}
      \frac{1}{R_t} \geq 1+ |1- \alpha -\beta|^{t-\tau},
  \end{equation}
  where $\tau = \max\{i: i \leq t, F_i = \text{ON}\}$. 
\end{theorem}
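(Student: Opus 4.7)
My plan is to prove Theorem~\ref{theorem:two} in two steps, establishing the converse via Corollary~\ref{corollary:outer} and the achievability by specializing the inner bound of Theorem~\ref{theorem:inner} to $N=2$ and showing that it coincides with the converse.

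\textbf{Converse.} I would apply Corollary~\ref{corollary:outer} to obtain
\[
\frac{1}{R_t} \;\geq\; \sum_{x_t} \max_{x_\tau} p(x_t\mid x_\tau) \;=\; \sum_{x_t} \max_{x_\tau} P^{\,t-\tau}_{x_\tau,x_t}.
\]
For the two-state chain, the spectral decomposition of $P$ (eigenvalues $1$ and $1-\alpha-\beta$) yields a closed form, from which one reads off the identities $P^n_{1,1}+P^n_{2,2}=1+(1-\alpha-\beta)^n$ and $P^n_{1,2}+P^n_{2,1}=1-(1-\alpha-\beta)^n$, together with $P^n_{1,1}-P^n_{2,1}=P^n_{2,2}-P^n_{1,2}=(1-\alpha-\beta)^n$. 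A case split on the sign of $(1-\alpha-\beta)^{t-\tau}$ shows that the maximum is attained on the diagonal in one case and on the off-diagonal in the other, and both cases collapse to
\[
\sum_{x_t} \max_{x_\tau} P^{\,t-\tau}_{x_\tau,x_t} \;=\; 1 + |1-\alpha-\beta|^{t-\tau},
\]
which is the desired converse.

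\textbf{Achievability.} I would first observe that the inner bound of Theorem~\ref{theorem:inner} simplifies dramatically for $N=2$. Since each of the two conditional distributions $p(\,\cdot\mid X_\tau,q)$ sums to one, we have $\lambda_1(q)+\lambda_2(q)=2$, so together with $\lambda_1(q)\leq\lambda_2(q)$ this forces $\lambda_1(q)\leq 1\leq\lambda_2(q)$. Substituting into the definition of $\theta_i$ yields
\[
\sum_{i=1}^{2} i\,\theta_i(q) \;=\; \lambda_1(q) + 2\bigl(1-\lambda_1(q)\bigr) \;=\; 2-\lambda_1(q) \;=\; \lambda_2(q),
\]
and $\lambda_2(q)$ equals $\sum_{x_t}\max_{x_\tau} p(x_t\mid x_\tau,q)$, so the inner and outer bounds of Theorems~\ref{theorem:inner} and \ref{theorem:outer} coincide for $N=2$. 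Any privacy-preserving scheme satisfies $p(Q_{[t-1]}\mid X_\tau)=p(Q_{[t-1]})$, hence $\mathbb{E}_Q[p(X_t=x\mid X_\tau,Q)] = p(X_t=x\mid X_\tau)$; combined with convexity of the maximum this gives
\[
\mathbb{E}_Q\bigl[\max_{x_\tau} p(X_t=x\mid X_\tau,Q)\bigr] \;\geq\; \max_{x_\tau} p(X_t=x\mid X_\tau),
\]
which after summing over $x_t$ recovers $1+|1-\alpha-\beta|^{t-\tau}$ as a lower bound on the achievable cost. What remains is to exhibit a scheme making this inequality tight, which I would do by induction on $t-\tau$: the base case $t=\tau$ is the trivial download-everything scheme with cost $2$, and the inductive step invokes the Section~\ref{section:inner} randomization (generalizing the construction of Table~\ref{table:example}) that, for $N=2$, preserves the sign of $p(X_t=1\mid X_\tau=1,Q)-p(X_t=1\mid X_\tau=2,Q)$ across all realizations of $Q$ in the support.

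\textbf{Main obstacle.} The delicate step is verifying the sign-consistency condition above, which is exactly what turns the convexity inequality into an equality. For $N=2$ there are only two conditional distributions to compare, so the Section~\ref{section:inner} scheme has enough degrees of freedom to globally align the signs of the pairwise differences; for $N>2$ the analogous alignment over all $\binom{N}{2}$ pairs is over-determined, which is why the inner and outer bounds separate in general and why tight closed-form rate expressions like \eqref{eq:theorem-two} are special to the two-source setting.
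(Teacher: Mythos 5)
Your converse is correct and coincides with the paper's (Section~\ref{sec:two-converse}): both apply Corollary~\ref{corollary:outer} and evaluate $\sum_{x_t}\max_{x_\tau}P^{t-\tau}_{x_\tau,x_t}$ in closed form. Your observation that $\lambda_1(q)+\lambda_2(q)=2$ forces $\lambda_1\leq 1\leq\lambda_2$ and hence $\sum_i i\,\theta_i(q)=\lambda_2(q)=\sum_{x_t}\max_{x_\tau}p(x_t\mid x_\tau,q)$ is also correct and matches the paper's reduction of \eqref{eq:theorem-inner} to \eqref{eq:achieve-two-rate}. So both you and the paper establish that the inner and outer expressions of Theorems~\ref{theorem:inner} and~\ref{theorem:outer} coincide for $N=2$.

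The gap is in the achievability step, and you have in fact flagged it yourself. Your convexity inequality $\mathbb{E}_Q[\max_{x_\tau}p(x_t\mid x_\tau,Q)]\geq\max_{x_\tau}p(x_t\mid x_\tau)$ goes in the direction needed for a converse, not for achievability: you need to exhibit a concrete scheme for which $V:=\sum_q p(q)\sum_{x_t}\max_{x_\tau}p(x_t\mid x_\tau,q)$ is \emph{equal} to $1+|1-\alpha-\beta|^{t-\tau}$, which requires showing the convexity inequality is tight. You reduce this to a sign-consistency property---that the sign of $p(x_t\mid X_\tau=1,q)-p(x_t\mid X_\tau=2,q)$ is the same across all $q$ in the support and agrees with the unconditional sign---but you assert rather than prove it, and you give no mechanism for propagating it through the induction on $t-\tau$. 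Note that the quantity $V$ at time $t$ depends on the entire query history $q_{[t-1]}$, whose distribution is itself scheme-dependent, so the induction is not as clean as your sketch suggests.

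The paper closes this gap by a different and more concrete device: it writes the encoding function $w(q_t\mid x_t,x_\tau,q_{[t-1]})$ explicitly (your \eqref{eq:encoding-function-two}), and then shows (Proposition~\ref{lemma:two-query-markov}) that the sequence of query cardinalities $|Q_\tau|,|Q_{\tau+1}|,\ldots,|Q_t|$ forms a two-state Markov chain with an absorbing state at $|Q|=1$. The geometric-decay factor $(1-\sum_{x_t}\min_{x_{t-1}}p(x_t\mid x_{t-1}))^{t-1-\tau}$ then follows immediately, and substituting the transition probabilities gives \eqref{eq:theorem-two}. This Markov-chain structure on $|Q_i|$ is precisely what implements the sign-alignment you are appealing to: when $|q_{t-1}|=1$, the conditional $p(x_t\mid x_\tau,q_{[t-1]})$ is independent of $x_\tau$, so the difference is zero; when $|q_{t-1}|=2$, the construction ties $x_\tau$ and $x_{t-1}$ together so that the sign of the difference is inherited from the one-step transition matrix. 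If you want to complete your proposal, you should supply this (or an equivalent) structural argument; as written, the achievability half is a plausible plan rather than a proof.
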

\noindent

\begin{table*}[b]	
    \centering
    \begin{tabular}{l|ccc||ccc||ccc}
	\diagbox[dir=SE,width=58pt,height=22pt]{\scriptsize$X_{\tau},X_{t}$}{\scriptsize$Q_{t}$} & $A$ & $B$ & $AB$ &$A$ & $B$ & $AB$&$A$ & $B$ & $AB$\\[\smallskipamount]
	\toprule
	$A,A$ &     $\frac{\beta}{1-\alpha}$            & 0     &  $\frac{1-\alpha - \beta}{1-\alpha}$&$\frac{1-\alpha}{\beta}$            & 0     &  $\frac{\alpha + \beta-1}{\beta}$&1&0&0\\[\smallskipamount]
	
	$A,B$ & 0  &      $1$              &     0 &$ 0$  &      $1$              &     $0$    &0&    $\frac{1-\beta}{\alpha}$              &     $\frac{\alpha+\beta-1}{\alpha}$    \\[\smallskipamount]
	
	$B,A$ &       $1$          & 0     &  0    &$1$  &  0&0  &  $ \frac{1-\alpha}{\beta}$    &0        &  $\frac{\alpha + \beta-1}{\beta}$ \\[\smallskipamount]
	
	$B,B$ & 0  &  $ \frac{\alpha}{1-\beta}$                    &  $\frac{1-\alpha - \beta}{1-\beta}$&0  &  $ \frac{1-\beta}{\alpha}$                    &  $\frac{\alpha + \beta-1}{\alpha}$&0  &  1  & 0\\[\smallskipamount]
	\toprule
	\multicolumn{1}{c}{} & \multicolumn{3}{c}{(a) $\alpha+\beta< 1$} & \multicolumn{3}{c}{(b) $\alpha+\beta> 1$ and $t$ is even} & \multicolumn{3}{c}{(c) $\alpha+\beta> 1$ and $t$ is odd} 
	\end{tabular}
	\caption{The optimal ON-OFF privacy scheme that achieves the bound in \eqref{eq:theorem-two} for $N=2$. The query $Q_t$ is probabilistic and depends on the current request $X_t$, the previous query $Q_{t-1}$ and the last private request $X_{\tau}$.  The scheme consists of the following two cases: (i) if $Q_{t-1} = \{1,2\}$, \Ie the previous query was for two messages, then the current query $Q_t=X_t$; (ii) if $Q_{t-1} \neq \{1,2\}$, \Ie the previous query was for one message, then the current query $Q_{t}$ is chosen based on the probabilities $\pb{q_t|x_{\tau},x_t,q_{t-1}}$ given in this table. For (a) $\alpha+\beta<1$, (b) and (c) are for $\alpha+\beta>1$ where $t-\tau$ is even or odd respectively \cite{Naim_2019}.
	}
	   \label{table:Qt}
\end{table*}

Theorem \ref{theorem:two} reflects the fact that, when the Markov chain is ergodic, the information carried by $X_t$ about $X_\tau$ is decreasing exponentially as $t-\tau$ grows,
so the user can eventually directly ask for the desired message at time $t$ without being concerned about leaking information about $X_{\tau}$.  
Table \ref{table:Qt} gives an explicit scheme that achieves the rate in \eqref{eq:theorem-two}. The details of this construction will be further discussed in Section \ref{sec:two-achievability}. Figure~\ref{fig:th_plot} shows the rate $R_t$ as a function of time for different values of $\alpha+\beta$. As $\alpha+\beta$ approaches $1$, the correlation between the request decreases leading to an increase in the rate.

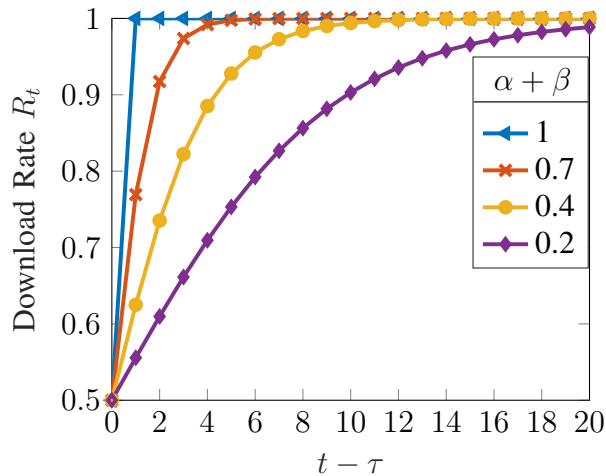
\begin{figure}[t]
    \centering
%
%
\definecolor{mycolor1}{rgb}{0.00000,0.44700,0.74100}%
\definecolor{mycolor2}{rgb}{0.85000,0.32500,0.09800}%
\definecolor{mycolor3}{rgb}{0.92900,0.69400,0.12500}%
\definecolor{mycolor4}{rgb}{0.49400,0.18400,0.55600}%
\definecolor{darkred}{rgb}{0.64, 0.0, 0.0}
\begin{tikzpicture}

\begin{axis}[%
width=2.5in,
height=2in,
at={(1.011in,0.642in)},
scale only axis,
xmin=0,
xmax=20,
xtick={0, 2, 4, 6, 8, 10, 12, 14, 16, 18, 20},
xlabel style={font=\color{white!15!black}},
xlabel={ $t-\tau$},
ymin=0.5,
ymax=1,
ylabel style={font=\color{white!15!black}},
ylabel={Download Rate $R_t$},
axis background/.style={fill=white},
legend style={at={(0.755,0.345)}, anchor=south west, legend cell align=left, align=left, draw=white!15!black}
]

 \addlegendimage{empty legend}
 \addlegendentry{\hspace{-.55cm}\textbf{$\alpha+\beta$}}

 \addlegendimage{empty legend}
\addlegendentry{\hspace{-1.5cm} }

\addplot [color=mycolor1, line width=1.5pt, mark size=2pt, mark=triangle, mark options={solid, rotate=90, mycolor1}]
  table[row sep=crcr]{%
0	0.5\\
1	1\\
2	1\\
3	1\\
4	1\\
5	1\\
6	1\\
7	1\\
8	1\\
9	1\\
10	1\\
11	1\\
12	1\\
13	1\\
14	1\\
15	1\\
16	1\\
17	1\\
18	1\\
19	1\\
20	1\\
};
\addlegendentry{1}

\addplot [color=mycolor2, line width=1.5pt, mark size=3pt, mark=x, mark options={solid, mycolor2}]
  table[row sep=crcr]{%
0	0.5\\
1	0.769230769230769\\
2	0.91743119266055\\
3	0.973709834469328\\
4	0.991965082829084\\
5	0.997575890585876\\
6	0.999271531053862\\
7	0.999781347819232\\
8	0.99993439430439\\
9	0.999980317387413\\
10	0.999994095134868\\
11	0.999998228533138\\
12	0.999999468559282\\
13	0.999999840567725\\
14	0.999999952170312\\
15	0.999999985651093\\
16	0.999999995695328\\
17	0.999999998708598\\
18	0.999999999612579\\
19	0.999999999883774\\
20	0.999999999965132\\
};
\addlegendentry{0.7}

\addplot [color=mycolor3, line width=1.5pt, mark size=2pt, mark=*, mark options={solid, mycolor3}]
  table[row sep=crcr]{%
0	0.5\\
1	0.625\\
2	0.735294117647059\\
3	0.822368421052631\\
4	0.885269121813031\\
5	0.927850356294537\\
6	0.955423749541397\\
7	0.972768702061958\\
8	0.98348129088135\\
9	0.990022850677816\\
10	0.993989724239196\\
11	0.996385144031034\\
12	0.997827945753317\\
13	0.998695634190672\\
14	0.999216971972409\\
15	0.999530035985447\\
16	0.99971796857342\\
17	0.999830762051884\\
18	0.999898450356709\\
19	0.999939067738966\\
20	0.9999634397523\\
};
\addlegendentry{0.4}

\addplot [color=mycolor4, line width=1.5pt, mark size=2pt, mark=diamond, mark options={solid, mycolor4}]
  table[row sep=crcr]{%
0	0.5\\
1	0.555555555555556\\
2	0.609756097560976\\
3	0.661375661375661\\
4	0.709421112372304\\
5	0.753193540612196\\
6	0.792302621570914\\
7	0.82664084901967\\
8	0.856331426842715\\
9	0.881664935499932\\
10	0.903037126829805\\
11	0.920895664738407\\
12	0.9356992379835\\
13	0.947889238046222\\
14	0.957872329434476\\
15	0.966011492215792\\
16	0.972623093734289\\
17	0.977977895569679\\
18	0.982304377486352\\
19	0.985793222434495\\
20	0.988602192725058\\
};
\addlegendentry{0.2}

\end{axis}
\draw[line width=0.45pt] (7.38,5.6) -- (8.9,5.6);
\end{tikzpicture}%
    \caption{The maximum rate $R_t$, as given in Theorem \ref{theorem:two}, as a function of $t-\tau$ for different values of $\alpha+\beta$. As $\alpha+\beta$ approaches $1$, the correlation between the requests decreases leading to an increase in the rate. For $\alpha+\beta=1$, the requests are independent. In this case, when privacy is ON at time $t$, which means $t-\tau=0$, the user has to download both messages, \Ie $R_t=1/2$. When privacy is OFF at time $t$
, which means $t-\tau>0$, the user only downloads the desired message, \Ie $R_t=1$.}
    \label{fig:th_plot}
    \vspace{-10pt}
\end{figure}

\section{Proof of the Outer Bound in Theorem \ref{theorem:outer}}
\label{section:outer}

Recall that the inverse of the rate is expressed as
\begin{equation}
\label{eq:rate-lower}
	\frac{1}{R_t} = \frac{\ell_t}{L} = \frac{1}{L} \mathbb{E}\left[\ell(Q_t) \right]. 
\end{equation}
Hence, to obtain an upper bound on the rate $R_t$ (a lower bound on $1/R_t$), we will derive a lower bound on the average downloading cost $\mathbb{E}[\ell(Q_t)]$ under the privacy and the decodability constraints. 

First, we define an auxiliary random variable $Y_t$ taking values in $\sP\left(\cN\right)$ based on the decodability of the subset of messages.   
Specifically, let $Y_t$ be a function of $Q_t$ such that $Y_t = \cD$ for $\cD \in \sP\left(\cN\right)$ if the answer $A_t$ can decode the messages $W_{\cD}$ but not any message $W_{i}$ for $i \in \cN\backslash \cD$. 
Roughly speaking, $Y_t$ represents the capability of decoding  messages from the query $Q_t$.
Note that since the query $Q_t$ and messages $W_{\cN}$ are independent, the decodability of any message is known by the server only through $Q_t$, that is, $Y_t$ is a function of $Q_t$. 
In this way, the alphabet $\cQ$ of the query is partitioned into $2^N$ classes based on the decodability of the subset of the messages. Clearly, from the definition of $Y_t$, we have
\begin{equation}
	\ell(Q_t) \geq |Y_t|\,L,
\end{equation}
and hence \eqref{eq:rate-lower} can be written as
\begin{equation}
\label{eq:rate-lower2}
	\frac{1}{R_t} \geq \mathbb{E}\left[|Y_t|\right].
\end{equation}

Thus, it remains for us to give a lower bound on $\mathbb{E}\left[|Y_t|\right]$ under the privacy and the decodability constraints. 

Now, we start to interpret the privacy and the decodability constraints.
By the definition of $Y_t$, the decodability can be rewritten as 
\begin{equation}
\label{eq:converse-decode-prob}
	\pb{x_t,y_t}=0, \forall x_t \notin y_t.
\end{equation}
Recall the privacy constraint that we require is
\begin{equation*}
  I \left(X_{\cB_t};Q_t|Q_{[t-1]}\right) = 0.
\end{equation*} 
Since 
\begin{align*}
	 I \left(X_{\cB_t};Q_t|Q_{[t-1]}\right) 
	& \geq I \left(X_{\tau};Q_t|Q_{[t-1]}\right) \geq I \left(X_{\tau};Y_t|Q_{[t-1]}\right),
\end{align*}
we can relax the privacy constraint by 
\begin{equation}
\label{eq:converse-privacy-ineq}
 I \left(X_{\tau};Y_t|Q_{[t-1]}\right) = 0.
\end{equation}

Therefore, to obtain an upper bound on the rate $R_t$ (a lower bound on $1/R_t$), it remains for us to give a lower bound on $\mathbb{E}\left[|Y_t|\right]$ such that \eqref{eq:converse-decode-prob} and \eqref{eq:converse-privacy-ineq} are satisfied, which relies on the following lemma. The proof of the lemma can be found in Appendix \ref{Appendix-lemma-lower}.
\begin{lemma}
\label{lemma:lower}
For any random variables $U$, $X$ and $Y$, taking values in the alphabet $\cN$, $\cN$ and $\sP\left(\cN\right)$ respectively, if $Y$ is independent of $U$, and $p(x,y)=0$ for $x \notin y$, then 
	\begin{equation}
	\label{eq:lemma-lower}
		\mathbb{E}\left[|Y|\right] \geq 
		 \sum_{x \in \cN} \max_{u \in \cN} p\left(x|u\right). 
	\end{equation}
\end{lemma}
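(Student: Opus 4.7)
The plan is to prove the inequality by an indicator-swap that reduces the claim to a per-coordinate bound, which then follows from combining both hypotheses in a single one-line computation. First I would rewrite the expectation as
\begin{equation*}
\mathbb{E}\left[|Y|\right] \;=\; \sum_{y \in \sP(\cN)} p(y)\,|y| \;=\; \sum_{x \in \cN}\sum_{y \ni x} p(y) \;=\; \sum_{x \in \cN} \Pr(x \in Y).
\end{equation*}
Hence the lemma is equivalent to the per-coordinate statement $\Pr(x \in Y) \geq \max_{u \in \cN} p(x \mid u)$ for every $x \in \cN$; summing over $x$ would yield \eqref{eq:lemma-lower}.

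Next I would prove this per-coordinate inequality. Fix $u \in \cN$ and decompose
\begin{equation*}
p(x \mid u) \;=\; \sum_{y} p(y \mid u)\, p(x \mid y, u).
\end{equation*}
The independence of $Y$ and $U$ collapses $p(y \mid u)$ to $p(y)$. For the conditional factor, the decodability hypothesis $p(x, y) = 0$ whenever $x \notin y$ forces $p(x, y, u) = 0$ for every $u$ (since $p(x,y) = \sum_u p(x,y,u)$ is a sum of nonnegatives), so $p(x \mid y, u) = 0$ outside $y$; together with the trivial bound $p(x \mid y, u) \leq 1$, this gives $p(x \mid y, u) \leq \mathds{1}[x \in y]$. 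Substituting,
\begin{equation*}
p(x \mid u) \;\leq\; \sum_{y \ni x} p(y) \;=\; \Pr(x \in Y),
\end{equation*}
and since the right-hand side is independent of $u$, maximizing over $u$ preserves the inequality.

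Summing the per-coordinate bound over $x \in \cN$ and combining with the identity of the first step closes the argument. I do not anticipate a real obstacle; the only subtlety worth emphasizing is that neither hypothesis alone suffices: independence is what lets $p(y \mid u)$ be replaced by $p(y)$, and the zero-support condition on $p(x, y)$ is what truncates the sum to $\{y : x \in y\}$. Used together in a single line they produce exactly the bound $\Pr(x \in Y)$, which is tight in the sense that equality holds whenever $X$ is almost surely determined by $Y$ (the natural decoding regime).
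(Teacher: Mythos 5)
Your proof is correct and is essentially the same argument as the paper's: both use the decomposition $p(x\mid u)=\sum_{y}p(y\mid u)\,p(x\mid y,u)$, replace $p(y\mid u)$ by $p(y)$ via independence, truncate the sum to $\{y:x\in y\}$ via the support condition, bound $p(x\mid y,u)\leq 1$, and then sum over $x$ while swapping the order of summation to recover $\mathbb{E}[|Y|]$. The only cosmetic difference is that you perform the indicator swap $\mathbb{E}[|Y|]=\sum_{x}\Pr(x\in Y)$ up front and state the result as a per-coordinate bound, whereas the paper does the swap at the very end.
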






For any given $q_{[t-1]}$, we can see that Lemma~\ref{lemma:lower} immediately gives a lower bound on $\mathbb{E}\left[|Y_t||q_{[t-1]}\right]$, \Ie
\begin{equation}
	\mathbb{E}\left[|Y_t||q_{[t-1]}\right] \geq  \sum_{x_t} \max_{x_{\tau}} \pb{x_t|x_{\tau},q_{[t-1]}}.
\end{equation}
Thus, by summing over all $q_{[t-1]}$, we can obtain that
\begin{equation}
\label{eq:converse-cardinality-lower}
	\mathbb{E}\left[|Y_t|\right] = \sum_{q_{[t-1]}} \pb{q_{[t-1]}}  \mathbb{E}\left[|Y_t||q_{[t-1]}\right] \geq \sum_{q_{[t-1]}} \pb{q_{[t-1]}} \sum_{x_t} \max_{x_{\tau}} \pb{x_t|x_{\tau},q_{[t-1]}}.
\end{equation}

By substituting \eqref{eq:converse-cardinality-lower} in \eqref{eq:rate-lower}, we finally get
\[\frac{1}{R_t} \geq  \sum_{q_{[t-1]}} \pb{q_{[t-1]}} \sum_{x_t} \max_{x_{\tau}} \pb{x_t|x_{\tau},q_{[t-1]}},\]
which completes the proof. 


\section{Inner Bound in Theorem~\ref{theorem:inner}}
\label{section:inner}
Before we move on to describe the achievable  scheme,
we present an example for $N=3$ sources, which 
illustrates the basic idea of 
the scheme that achieves the bound in \eqref{eq:theorem-inner}.
\subsection{Example of an achievable scheme}
\label{ex:achievability_N}
Suppose the transition probabilities of the Markov chain are given by
    \begin{equation}
    \label{eq:example-tran-prob}
        P =
            \begin{bmatrix}
                   0.1 & 0.3 & 0.6 \\
                   0.5 & 0.4 & 0.1 \\
                   0.2 & 0.5 & 0.3
            \end{bmatrix},
    \end{equation}   
where $P_{i,j}=\Pb{X_t=j|X_{t-1}=i}$. 

Assume that privacy is ON at time $t=0$ and privacy is OFF at time $t=1$.
At time $t=0$, we know the user has to send the query $Q_0=\{1,2,3\}$. Our goal is to design the query $Q_1$ at $t=1$.  In particular, in this example,
the query $Q_1$ is uncoded and is a probabilistic function of the previous request $X_0$, the current request $X_1$ and the previous query $Q_0$. As such, we will show how to design the query encoding function $\pb{q_1|x_1,x_0}$\footnote{We drop $q_0$ in  $\pb{q_1|x_1,x_0, q_0}$ since $q_0=\{1,2,3\}$ is a constant.}, or equivalently $\pb{q_1,x_1|x_0}$, for all $x_0,x_1 \in  \{1,2,3\}$ and $q_1 \in \sP\left(\{1,2,3\}\right)$.
The distribution $\pb{q_1,x_1|x_0}$ is represented in Table~\ref{table-example}. Throughout this example, we will show how to fill in the values of the cells in Table~\ref{table-example}.

    \begin{table*}[t]	
        \centering
        \begin{tabular}{|c|c|c|c|c|c|c|c|c||>{\columncolor{lightblue}}c|}
             \thickhline
            \multicolumn{2}{|l|}{\diagbox[dir=SE,width=55pt,height=33pt]{\hspace{-1pt}$x_0$\hspace{15pt} $x_1$}{\vspace{-32pt}$q_1$}} 
            & $\{1\}$ & $\{2\}$ & $\{3\}$ & $\{1,2\}$ & $\{1,3\}$ & $\{2.3\}$ & $\{1,2,3\}$ &\textcolor{black}{Budget $\left(P_{i,j}\right)$}\\
            \thickhline 
            
            \multirow{3}{*}{$1$} & $1$ & $0.1$ & \cellcolor{light_gray}$0$ & \cellcolor{light_gray}$0$ & $0$ & $0$ & \cellcolor{light_gray}$0$ & $0$ &\textcolor{black}{$0.1$}\\
            \cline{2-10} 
            & $2$ & \cellcolor{light_gray}$0$ & $0.3$ & \cellcolor{light_gray}$0$ & $0$ & \cellcolor{light_gray}$0$ & $0$ & $0$ &\textcolor{black}{$0.3$}\\
            \cline{2-10} 
             & $3$ & \cellcolor{light_gray}$0$ & \cellcolor{light_gray}$0$ & $0.1$ & \cellcolor{light_gray}$0$ & $0.1+0.2$ & $0.1$ & $0.1$ &\textcolor{black}{$0.6$}\\
            \thickhline 
            \multirow{3}{*}{$2$} & $1$ & $0.1$ & \cellcolor{light_gray}$0$ & $\cellcolor{light_gray}0$ & $0$ & $0.1+0.2$ & \cellcolor{light_gray}$0$ & $0.1$ &\textcolor{black}{$0.5$}\\
            \cline{2-10} 
             & $2$ & \cellcolor{light_gray}$0$ & $0.3$ & \cellcolor{light_gray}$0$ & $0$ & \cellcolor{light_gray}$0$ & $0.1$ & $0$ &\textcolor{black}{$0.4$}\\
            \cline{2-10} 
             & $3$ & \cellcolor{light_gray}$0$ & \cellcolor{light_gray}$0$ & $0.1$ & \cellcolor{light_gray}$0$ & $0$ & $0$ & $0$ &\textcolor{black}{$0.1$}\\
            \thickhline
            \multirow{3}{*}{$3$} & $1$ & $0.1$ & \cellcolor{light_gray}$0$ & \cellcolor{light_gray}$0$ & $0$ & $0.1$ & \cellcolor{light_gray}$0$ & $0$ &\textcolor{black}{$0.2$}\\
            \cline{2-10} 
             & $2$ & \cellcolor{light_gray}$0$ & $0.3$ & \cellcolor{light_gray}$0$ & $0$ & \cellcolor{light_gray}$0$ & $0.1$ & $0.1$ &\textcolor{black}{$0.5$}\\
            \cline{2-10} 
             & $3$ & \cellcolor{light_gray}$0$ & \cellcolor{light_gray}$0$ & $0.1$ & \cellcolor{light_gray}$0$ & $0.2$ & $0$ & $0$ &\textcolor{black}{$0.3$}\\
             \toprule
    	\end{tabular}
    	  \caption{The constructed distribution $\pb{q_1,x_1|x_0}$ for the given $\pb{x_1|x_0}$ in Example~\ref{ex:achievability_N}.}
    	   \label{table-example}
    \end{table*}

As requested, the query $Q_1$ should satisfy the decodability and the privacy constraints. The two constraints can be translated into the following rules for filling Table \ref{table-example}.
\begin{enumerate}
   \item Satisfying the decodability constraint is straightforward. We set $\pb{q_1,x_1|x_0}=0$ for all $x_1\notin q_1$, \Ie setting all the gray highlighted cells in Table~\ref{table-example} to zero. This guarantees that the user always receives messages containing the one he wants when the server responds to his query.
   
   \item The privacy constraint requires that $Q_1$ is independent of $X_0$, \Ie
   \[\pb{Q_1=q_1|X_0=1}=\pb{Q_1=q_1|X_0=2}=\pb{Q_1=q_1|X_0=3},\]
for all $q_1 \in \sP\left(\{1,2,3\}\right)$. By the law of total probability, this can be written as 
   \begin{equation}
       \sum_{x_1}\pb{q_1,x_1|X_0=1}=\sum_{x_1}\pb{q_1,x_1|X_0=2}=\sum_{x_1}\pb{q_1,x_1|X_0=3}.
   \end{equation}
   To translate this in Table \ref{table-example}, each column is divided into 3 blocks (pertaining to $x_0\in \{1,2,3\}$), and the sum of the cells in each block in a given column are to be equal, \Eg in column $\{1,3\}$ each block sum to $0.3$.
   
   \item Since the entries are probabilities, this requires the sum of row $j$ in a given block $i$ to be equal to $\pb{X_1=j|X_0=i}$, \Ie $P_{i,j}$ in the matrix $P$. We will refer to $P_{i,j}$ as our \textit{budget} for row~$j$ in block~$i$, it is highlighted in blue in Table~\ref{table-example}. 
\end{enumerate}

     We now introduce an ordering of probabilities, such that $$\Pb{X_1=j|X_0=x_{0}^{(j,1)}}\leq\Pb{X_1=j|X_0=x_{0}^{(j,2)}}\leq\Pb{X_1=j|X_0=x_{0}^{(j,3)}}$$ for each $j \in \{1,2,3\}$.  For example, for $X_1=1$, we observe that $P_{1,1}\leq P_{1,3} \leq P_{1,2}$, so $x_{0}^{(1,1)}=1$, $x_{0}^{(1,2)}=3$, and $x_{0}^{(1,3)}=2$. We summarize the values of the rest of the variables in the Table~\ref{table:example-precal}.

        \begin{table}[t]
	\normalsize
\centering
	\begin{tabular}{  c |c c c || c || c}

	 $x_{0}^{(j,i)}$& $j=1$ &$j=2$& $j=3$&$\lambda_i$&$\theta_i$\\
	 \toprule
	 $x_{0}^{(j,1)}$ &$1$&$1$&$2$&$0.5$&$0.5$\\
	 $x_{0}^{(j,2)}$& $3$ &$2$ &$3$&$0.9$&$0.4$\\
	 $x_{0}^{(j,3)}$&$2$&$3$&$1$&$1.6$&$0.1$\\
	 \hline

	\end{tabular}
    \caption{Useful Variables for Example \ref{ex:achievability_N}.}
	\label{table:example-precal}
\end{table}
 
It is worth noting that downloading all messages is always a feasible solution here. More precisely, setting the probability of querying three messages to be equal to the budget, i.e., $$p\left(Q_1=\{1,2,3\},X_1=x_1|X_0=x_0\right)=p\left(X_1=x_1|X_0=x_0\right)$$ for all $x_0,x_1\in\{1,2,3\}$, always satisfies rules one-three. 
Next, we present the algorithm that better fills the table and satisfies the aforementioned rules. 
The main idea is to assign values as large as possible to $Q_1$ with small cardinality, and this will ultimately lower the communication cost.
\begin{itemize}     
\item \textbf{Step 1:} 
We start with queries $q_1$ of cardinality one, \Ie $|q_1|=1$.  
We adopt a greedy-like approach, which means we try to maximize the value filled in the first three columns. Due to the second and third rules 
mentioned above, the maximum values we can choose are
    \begin{align}
    \label{eq:example_step1}
    \hspace{-50pt}
	\pb{Q_1=\{x_1\},X_1=x_1|X_0=x_0} = \min_{x_0} \pb{x_1|x_0} 
	= \pb{x_1|x_0^{(x_1,1)}}
	=\begin{cases}
			0.10, & x_1=1,\\
			0.30, & x_1=2,\\
			0.10, & x_1=3.
		\end{cases}	
	\end{align}
	
Note that in some rows the rest of the cells, \Eg row $1$ in block $1$, have to be zero, because from rule 3 we know that their budget has been consumed. 
	
        

\item 
\textbf{Step 2:} 
When $|q_1|=2$, the construction is
more complicated because each block has two cells to fill. We describe it as follows. 
\begin{itemize}
\item[$\circ$] For $X_1 = 1$, we know that $x_0^{(1,1)}=1$ and $x_0^{(1,2)}=3$. 
Since, in Step 1 \eqref{eq:example_step1}, we consumed the probability $\pb{X_1=1|X_0=x_0^{(1,1)}}$, we deduct it from the the second minimal value $\pb{X_1=1|X_0=x_0^{(1,2)}}$, and calculate 
\[\pb{X_1=1|X_0=x_0^{(1,2)}} - \pb{X_1=1|X_0=x_0^{(1,1)}} = 0.1.\]
Then, we may find some $\hat{q}$ (to be determined), such that $|\hat{q}|=2$ and $1 \in \hat{q}$ and set 
\begin{equation}
\label{eq:example-assign-2}
\begin{aligned}
& \pb{Q_1=\hat{q},X_1=x_1|X_0=x_0} \\
& ~~~~ = 
\begin{cases}
    0.1, & x_1=1 \wedge x_0 \neq x_0^{(1,1)}~ \text{or}~ x_1=\hat{q}\backslash \{1\} \wedge x_0 = x_0^{(1,1)}, \\
    0, & \text{others}.
\end{cases}
\end{aligned}
\end{equation}
Here, we have two options for $\hat{q}$, either $\{1,2\}$ or $\{1,3\}$. If $\hat{q}=\{1,2\}$, from rule 2, we know that the summation of each block must be the same. However, if we inspect first block \Ie $X_0=1$, we can find that the budget for the first two rows of the first block is zero, which means that we do not have enough budgets to assign values according to \eqref{eq:example-assign-2}.
Therefore, if we choose $\hat{q}=\{1,2\}$, then it will violate rule 2, so that $\hat{q}$ is chosen to be $\{1,3\}$, and fill in the table according to \eqref{eq:example-assign-2}.

\item[$\circ$] For $X_1=2$ the procedure is the same as we did for  $X_1=1$ and details are omitted. 

	    
	    
\item[$\circ$] 
For $X_1 = 3$, we know that $x_0^{(3,1)}=2$ and $x_0^{(3,2)}=3$. Also, we have \[\pb{X_1=3|X_0=x_0^{(3,2)}} - \pb{X_1=3|X_0=x_0^{(3,1)}} = 0.2.\]

Then, we follow the same procedure as above by determining $\hat{q}=\{1,3\}$. However, since we have assigned a value $0.1$ to the cell $\pb{Q_1=\{1,3\},X_1=3|X_0=1}$ in previous steps, 
we augment its value by $0.2$, and finally we have 
\[\pb{Q_1=\{1,3\},X_1=3|X_0=1}=0.1+0.2=0.3.\]

	\end{itemize}

    \item \textbf{Step 3:}
        When $|q_1|=3$, since this is the last column, we just need to complete the table such that the budget of all rows is fully consumed.
        
\end{itemize}

Finally, let us evaluate the achievable rate $R_1$, equivalently $1/\mathbb{E}[|Q_1|]$, achieved by the constructed $\pb{q_1,x_1|x_0}$. It is easy to see that we assign $\theta_1=\lambda_1 = 0.5$ to cells such that $|q_1|=1$, $\theta_ 2= \lambda_2 - \lambda_1 = 0.4$ to cells such that $|q_1|=2$, and  $\theta_3= 1 - \lambda_2 = 1-0.9 = 0.1$ to cells such that $|q_1|=3$ for each block, so that we have
\[\mathbb{E}[|Q_1|] = \sum_{i=1}^{3} i \, \theta_i=1.6,\]
where $\lambda_i$ and $\theta_i$ are defined in \eqref{eq:def_lambda} and \eqref{eq:def_theta} respectively. Thus, $R_1=5/8$ is achievable in this example. 
One may notice that the outer bound in Corollary~\ref{corollary:outer} gives 
\[\frac{1}{R_1} \geq \sum_{x_1} \max_{x_{0}} \pb{x_1|x_{0}}= 0.5+0.5+0.6=1.6,\]
which indicates that $R_1=5/8$ is optimal for this example. However, we would like to mention that this example is special because it shows an instance where the bounds~\eqref{eq:theorem-outer} and \eqref{eq:theorem-inner} match. In general, for a choice of transition probabilities different from those given in \eqref{eq:example-tran-prob}, there might be a gap, as illustrated in Example~\ref{ex:symmetric_markov}.



\subsection{Proof of Theorem \ref{theorem:inner}}
\label{sec:proof_inner}
We will build on the previous example to describe the generalized scheme achieving the rate given
in \eqref{eq:theorem-inner}. The proposed coding scheme retrieves messages in the uncoded form, so we assume that $\cQ= \sP\left(\cN\right)$ in the remaining parts of this section. 

\noindent \textbf{Answer encoding function:}
The answer encoding function $\phi_t$ is given by 
\begin{equation}
\label{eq:answer-encoding}
  	A_t = \phi_t \left(Q_t,W_{\cN}\right) = W_{\cA} 
\end{equation}
for any $Q_t=\cA \in \sP\left(\cN\right)$. 

\noindent \textbf{Answer length function:}
The length of the answer is given by
\begin{equation*}
	\ell\left(Q_t\right) = |Q_t|\,L,
\end{equation*}
and the normalized average length is then given by
\begin{equation}
\label{eq:achievable-rate}
	\frac{1}{R_t} = \frac{\ell_t}{L} = \mathbb{E}\left[|Q_t|\right].
\end{equation}

\noindent \textbf{Query encoding function:}
At time $t$, suppose that the query $Q_t$ is a stochastic function of $X_t$, $X_{\tau}$ and $Q_{[t-1]}$. Recall that $\tau=\max \,\cB_t$, \Ie the last time privacy was ON. 
For any given $q_{[t-1]}$, we claim that there exists an encoding function $\wb{q_t|x_{\tau},x_t,q_{[t-1]}}$ giving
\begin{equation}
    \label{eq:expected_value}
	\mathbb{E}\left[|Q_t||q_{[t-1]}\right] \leq  \sum_{i=1}^{N} i \, \theta_i(q_{[t-1]}),
\end{equation} 
as well as satisfying two constraints, \Ie
\begin{equation}
\label{eq:achieve-two-decode}
	\pb{x_t,q_t|q_{[t-1]}}=0, \forall x_t \notin q_t,
\end{equation}
and
\begin{equation}
    \label{eq:privacy}
	I(Q_t;X_{\tau}|Q_{[t-1]}=q_{[t-1]}) = 0,
\end{equation}
Note that \eqref{eq:achieve-two-decode} guarantees the decodability from the answer encoding function $\phi_t$ given by \eqref{eq:answer-encoding}, and \eqref{eq:privacy} is a relaxed privacy constraint, where we recall the original privacy constraint $I \left(Q_t;X_{\cB_t}|Q_{[t-1]}\right) = 0$.

The following lemma justifies the existence of such a query encoding function.
\begin{lemma}
\label{lemma:algorithm}
For any given random variables $U, X \in \cN$, suppose that
\begin{equation}
    \begin{aligned}
        & \pb{X=x|U =u^{(x,1)}} \leq \pb{X=x|U =u^{(x,2)}} \leq \cdots \leq \pb{X=x|U =u^{(x,N)}}.
    \end{aligned}
\end{equation} 
Then, there exists a random variable $Y\in \sP(\cN)$ such that $Y$ is independent of $U$, $p(x,y)=0$ for $x \notin y$, and
\begin{equation}
\label{eq:lemma-upper}
    \mathbb{E}\left[|Y|\right] \leq  \sum_{i=1}^{N} i \, \theta_i,
\end{equation}
where $\theta_i=\min \left\{ 1,\sum\limits_{x \in \cN} \pb{X=x|U=u^{(x,i)}}\right\} - \min \left\{1,\sum\limits_{x \in \cN} \pb{X=x|U=u^{(x,i-1)}}\right\}$ for $i=1,\ldots,N$. 
\end{lemma}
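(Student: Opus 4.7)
The plan is to construct the joint distribution $p(y,x\mid u)$ layer by layer in order of increasing cardinality $k=1,\ldots,N$, following the pattern of the worked example in Section~\ref{ex:achievability_N}. At layer $k$, the construction will place a total mass of $\theta_k$ on subsets $y\in\sP(\cN)$ with $|y|=k$, doing so identically across blocks $u$ (so that $Y$ is independent of $U$), respecting the decodability constraint $p(y,x\mid u)=0$ for $x\notin y$, and never exceeding any row budget $\pb{X{=}x\mid U{=}u}$. If this succeeds, $\Pr(|Y|=k)=\theta_k$ for every $u$, so $\mathbb{E}[|Y|]=\sum_{k=1}^N k\,\theta_k$, matching \eqref{eq:lemma-upper}. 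The totals are consistent because $\lambda_N\geq\sum_x\pb{X{=}x\mid U{=}u^\ast}=1$ for any fixed $u^\ast$, giving $\sum_k\theta_k=\min\{1,\lambda_N\}=1$.

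The base layer $k=1$ is immediate: for each $x\in\cN$, set $p(Y{=}\{x\},X{=}x\mid U{=}u)=\pb{X{=}x\mid U{=}u^{(x,1)}}$ for every $u$. Since $u^{(x,1)}$ attains the minimum of $\pb{X{=}x\mid U{=}u}$ over $u$, this never exceeds a cell budget and it exactly exhausts the budget in row $(x,u^{(x,1)})$; the value is constant in $u$, so independence holds, and the layer-$1$ per-block total equals $\lambda_1=\theta_1$. For the inductive step, suppose layers $1,\ldots,k-1$ have been placed successfully, so that the residual budget of row $(x,u^{(x,i)})$ is $\pb{X{=}x\mid U{=}u^{(x,i)}}-\pb{X{=}x\mid U{=}u^{(x,k-1)}}$ for $i\geq k$ and zero for $i<k$, and the cumulative mass in each block equals $\lambda_{k-1}$. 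At layer $k$, for each $x$ I would allocate a further $\pb{X{=}x\mid U{=}u^{(x,k)}}-\pb{X{=}x\mid U{=}u^{(x,k-1)}}$ units of mass (truncated so as not to exceed $1-\lambda_{k-1}$) to a single size-$k$ subset $\hat y\ni x$ that is fixed across all blocks; within each block $u$, the mass is routed among the rows of $\hat y$ whose residual is positive, exactly as the example does when it moves mass from the row for $x$ to some other row $x'\in\hat y$ whenever the budget of $(x,u)$ has already been exhausted.

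The main obstacle is the combinatorial feasibility of the layer-$k$ routing: the subsets $\hat y$ must be chosen so that in every block $u$ the bipartite transportation problem (supply = residual row budgets on $\hat y$, demand = layer-$k$ mass on $\hat y$) admits a feasible flow. I would handle this by casting each layer as a transportation problem per block and checking Hall's condition: for every $S\subseteq\cN$, the layer-$k$ mass placed on subsets $y\subseteq S$ is at most the residual budget of block $u$ restricted to rows in $S$. The ordering in \eqref{eq:prob_order}, together with the inductive invariants that prior layers have consumed exactly the first $k-1$ smallest conditional probabilities in each row-group and left identical per-block totals, should reduce this inequality to a comparison of the partial sums $\lambda_i$; intuitively, if Hall's condition failed at layer $k$ one would obtain $\lambda_{k-1}\geq 1$, in which case the algorithm has already terminated with $\theta_k=0$ and nothing needs to be placed. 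Once feasibility is verified for every $k$, the construction yields a $Y\in\sP(\cN)$ independent of $U$, supported on sets containing $X$, with $\mathbb{E}[|Y|]=\sum_{k=1}^N k\,\theta_k$, establishing \eqref{eq:lemma-upper}.
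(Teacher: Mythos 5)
Your high-level strategy mirrors the paper's: construct the joint law layer by layer in increasing cardinality $k$, placing total mass $\theta_k$ on size-$k$ support sets identically across the blocks indexed by $u$, respecting row budgets, so that $\Pr(|Y|=k)=\theta_k$ and $\mathbb{E}[|Y|]=\sum_k k\,\theta_k$. The base layer and the observation that $\sum_k\theta_k=1$ are also correct. However, the inductive step has a genuine gap, and the part you flag as ``the main obstacle'' is precisely the part the paper's proof exists to resolve.

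The inductive invariant you state --- that after layers $1,\ldots,k-1$ the residual budget of row $(x,u^{(x,i)})$ equals $\pb{X{=}x\mid U{=}u^{(x,i)}}-\pb{X{=}x\mid U{=}u^{(x,k-1)}}$ for $i\geq k$ --- is false. It accounts only for mass that column $x$ places on its \emph{own} row, and ignores overflow routed through row $x$ in block $u$ from other columns $x'$ whose row $x'$ is already exhausted in that block. You can see the failure already in the paper's own worked example (Table~\ref{table-example}): after layer $2$, the residual in row $1$ of block $u=2$ is $0.1$, not $\pb{1|2}-\pb{1|u^{(1,2)}}=0.5-0.2=0.3$, because $0.2$ of overflow from $x=3$ was routed into that cell at layer $2$. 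This matters because your Hall-type feasibility check at layer $k$ quantifies over exactly these residuals; with the invariant wrong, the claim ``if Hall's condition failed at layer $k$ one would obtain $\lambda_{k-1}\geq 1$'' is unsupported and not obviously true. Relatedly, insisting on a \emph{single} size-$k$ set $\hat y\ni x$ per $(x,k)$ is an extra restriction that you neither justify nor need; the paper deliberately lets the layer-$k$ mass for a fixed $x$ split across several supports (multisets $z_k=\{\zeta_k,x\}$, later projected to sets), and this flexibility is what makes the feasibility verification go through.

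What the paper does, and what is missing from your argument, is the following machinery. First, it fixes in advance a vector $(\delta_1,\ldots,\delta_N)$ with $\pb{X{=}j|U{=}u^{(j,\sigma)}}\leq\delta_j\leq\pb{X{=}j|U{=}u^{(j,\sigma+1)}}$ and $\sum_j\delta_j=1$; this both implements your truncation cleanly and separates each row's budget into ``own mass'' ($\delta_x$) and ``lendable overflow capacity'' ($\max\{\pb{x|u}-\delta_x,0\}$), which is what the residual matrix $Q$ in \eqref{eq:initialize} tracks. Second, rather than your per-block transportation/Hall argument, it uses the buffer mechanism of Figure~\ref{fig:proof} to assemble the per-block overflow choices $I_i\times V_i$ into a consistent family of multisets with matching weights $\nu_k$ across blocks; and third, it proves the one feasibility inequality that actually needs to hold, \eqref{eq:proof-existence}, by an exact accounting argument in Appendix~\ref{app:proof-lemma-achievability} (showing that equality holds in \eqref{eq:justification-t2}, so $Q$ is emptied precisely). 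Your sketch identifies the right target but replaces these verifications with an appeal to intuition; as written it does not establish that the greedy layer-$k$ assignment can always be carried out.
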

\begin{proof}
We prove Lemma~\ref{lemma:algorithm} 
by designing a 
distribution $\pb{y,x|u}$ for any given distribution $\pb{x|u}$ satisfying the constraints  $Y\perp U$, $p(x,y)=0$ for $x \notin y$, and
\begin{equation*}
    \pb{|Y| \leq i} \geq \sum_{j=1}^{i} \theta_j, ~ i= 1,\ldots,N.
\end{equation*}
Moreover, we show that $\mathbb{E}\left[|Y|\right] \leq   \sum_{i=1}^{N} i \, \theta_i$ follows from the last constraint. The proof of Lemma~\ref{lemma:algorithm} is constructive,  \Ie we provide an algorithm that outputs the desired distribution. The details of the construction will be presented at the end of this section, and the justification of the algorithm and analysis of its complexity will be deferred to Appendix~\ref{app:proof-lemma-achievability}. 
\end{proof}

Before the detailed proof, we give the following reflections on the lemma.
\begin{enumerate}
\item This lemma generalizes the process we used to fill Table~\ref{table-example} for $N=3$ in Subsection~\ref{ex:achievability_N}. However, one may notice that the table therein contains about $N^2\,2^N$ entries, so any linear time approach such as filling them one by one will introduce an exponential blowup in complexity. Hence, the proof of the lemma  not only justifies the existence of an admissible $\pb{y,x|u}$, but also proposes a $\text{poly}(N)$ time algorithm to construct a $\pb{y,x|u}$ for any given distribution $\pb{x|u}$ to satisfy the constraints. 
\item If we treat each probability $\pb{y,x|u}$ for $x, u \in \cN$ and $y \in \sP\left(N\right)$ as a decision variable, we can see that both the objective function $\mathbb{E}[|Y|]$ and
two constraints, \Ie $Y$ is independent of $U$ and $p(x,y)=0$ for $x \notin y$, are linear, and hence the problem can  be indeed formulated as a linear programming problem with roughly
$N^2\,2^N$ variables and constraints, which makes the numerical solution impossible when $N$ goes large. The lemma here is aimed at finding a solution efficiently (avoid exponential overhead) and analytically (evaluate the objective value). 
More interpretations on this linear programming  perspective will be discussed in Section~\ref{sec:LP}. 

\end{enumerate}

For any given $q_{[t-1]}$, by letting $U \sim p_{X_{\tau}|q_{[t-1]}}$ and $X \sim p_{X_{t}|q_{[t-1]}}$ in Lemma~\ref{lemma:algorithm}, we can easily see that this lemma justifies the existence of  a query encoding function $\wb{q_t|x_{\tau},x_t,q_{[t-1]}}$ satisfying \eqref{eq:expected_value}, \eqref{eq:achieve-two-decode} and \eqref{eq:privacy}. 
The remaining piece to show is that the relaxed privacy constraint \eqref{eq:privacy} implies the desired privacy constraint \eqref{eq:privacy-1} for the given scheme,
\Ie $I(Q_t;X_{\tau}|Q_{[t-1]}) = 0$ implies $I \left(Q_t;X_{\cB_t}|Q_{[t-1]}\right) = 0$, which can be justified by Proposition~\ref{proposition:achieve-markov}. 
Therefore, we finish showing that for any given $q_{[t-1]}$, there exits an encoding function $\wb{q_t|x_{\tau},x_t,q_{[t-1]}}$ satisfying the decodability and the privacy constraint. Also, we know from Lemma~\ref{lemma:algorithm} that the encoding function $\wb{q_t|x_{\tau},x_t,q_{[t-1]}}$ yields 
\begin{equation*}
    \mathbb{E}\left[|Q_t||q_{[t-1]}\right] \leq  \sum_{i=1}^{N} i \, \theta_i(q_{[t-1]}).
\end{equation*} 
By averaging over all $q_{[t-1]}$, we have
\begin{equation}
\label{eq:achievable-rate-cardinality}
    \mathbb{E}\left[|Q_t|\right] \leq  \sum_{q_{[t-1]}} p(q_{[t-1]}) \sum_{i=1}^{N} i \, \theta_i(q_{[t-1]}),  
\end{equation}
which implies that $R_t$ is achievable (c.f.\eqref{eq:achievable-rate}) if 
\begin{equation}
    \frac{1}{R_t} \geq \sum_{q_{[t-1]}} p(q_{[t-1]}) \sum_{i=1}^{N} i \, \theta_i(q_{[t-1]}).
\end{equation}

\subsection{Constructive proof of Lemma~\ref{lemma:algorithm}}

First, let us recall some definitions and notation which will be used frequently in this section. For a fixed $x \in \cN$, suppose that
\begin{equation}
\label{eq:proof-order}
    \begin{aligned}
        & \pb{X=x|U =u^{(x,1)}} \leq \pb{X=x|U =u^{(x,2)}} \leq \cdots \leq \pb{X=x|U =u^{(x,N)}},
    \end{aligned}
\end{equation} 
where $u^{(x,i)}$ for $i=1,\ldots,N$ are $N$ distinct elements in $\cN$. Let
\begin{equation}
\label{eq:proof-sum}
  \lambda_i =  \sum_{x \in \cN} \pb{X=x|U=u^{(x,i)}},  
\end{equation}
and  
\begin{equation}
\label{eq:proof-diff}
    \theta_i = \min \{ 1,\lambda_i\} - \min \{ 1,\lambda_{i-1}\},
\end{equation}
where $\lambda_{0}$ is assumed to be $0$. Note that $\sum_{i=1}^{N} \theta_i = 1$.
Also, let
\begin{equation}
\label{eq:proof-maximum}
    \sigma = \max\{i: \lambda_i \leq 1 \}.  
\end{equation}


In this section, we will prove Lemma~\ref{lemma:algorithm} by designing a 
distribution $\pb{y,x|u}$ for any given distribution $\pb{x|u}$ satisfying the constraints  $Y\perp U$, $p(x,y)=0$ for $x \notin y$, and
\begin{equation}
\label{eq:lemma-probability}
    \pb{|Y| \leq i} \geq \sum_{j=1}^{i} \theta_j, ~ i= 1,\ldots,N.
\end{equation}
One can check that \eqref{eq:lemma-probability} yields
\begin{align*}
    \mathbb{E}\left[|Y|\right] & = \sum_{i=1}^N   i \, \pb{|Y| = i} 
     = \sum_{i=1}^N   \sum_{j=1}^{i}  \pb{|Y| = i}  
     = \sum_{j=1}^N   \sum_{i=j}^{N}  \pb{|Y| = i} 
     = \sum_{j=1}^N    \pb{|Y| \geq j} \\
    & = \sum_{j=1}^N  \left(1 - \pb{|Y| \leq j-1} \right)  
     \leq \sum_{j=1}^N  \left(1 - \sum_{i=1}^{j-1} \theta_{i}  \right)  
     = \sum_{j=1}^N   \sum_{i=j}^{N} \theta_{i}   
     = \sum_{i=1}^N   \sum_{j=1}^{i} \theta_{i}   \\
    & =  \sum_{i=1}^{N} i \, \theta_i,
\end{align*}
\Ie \eqref{eq:lemma-upper} to be proved in Lemma~\ref{lemma:algorithm}.

In particular, let $Z$ be a multiset $\left(\cN,m \right)$, where $\cN$ is the ground set and $m$ is the multiplicity function. The cardinality of the multiset $Z$ is the summation of multiplicities of all its element, \Ie
\begin{equation}
    |Z|= \sum_{x \in N} m(x).
\end{equation}
For example, given the ground set $\{a,b\}$ and the multiset $\{a,a,b\}$, the multiplicities of $a$ and $b$ are $m(a)=2$ and $m(b)=1$ respectively, and the cardinality of $|\{a,a,b\}|$ is $3$. For ease of notation, denote
\[\cZ = \left\{Z: Z \in \left(\cN,m \right), |Z| \leq N \right\},\]
\Ie the multiset whose elements are chosen from $\cN$ and whose cardinality is upper bounded by $N$.


We will prove that for any given $X$ and $U$, \Ie given any distribution $\pb{x|u}$ for $x, u \in \cN$, there exists a random variable $Z$ taking values in $\cZ$
such that $Z\perp U $, $p(x,z)=0$ for $x \notin z$, and 
\begin{equation}
\label{eq:lemma-upper-multiset}
    \pb{|Z| = i} = \theta_i,~\forall i=1,\ldots,\sigma+1, 
\end{equation}
Note that $\theta_i = 0$ for $i > \sigma+1$ from the definition  \eqref{eq:proof-diff}. By letting $Y=\text{Set}(Z)$, \Ie $Y$ is the corresponding set of the multiset $Z$,
we can easily see that if $Z\perp U $ and $p(x,z)=0$ for $x \notin z$, then $Y\perp U $ and $p(x,y)=0$ for $x \notin y$. Also, one can easily check that if \eqref{eq:lemma-upper-multiset} is satisfied, then \eqref{eq:lemma-probability} holds. Therefore, it is sufficient for us to justify the existence of such a $Z$ for any given $X$ and $U$. 

Now, we start the constructive proof,  \Ie for any given distribution $\pb{x|u}$, we will give an algorithm to construct some $Z$ satisfying that
\begin{equation}
    \pb{z,x}=0,~\forall x \notin z,
\end{equation}
and 
\begin{equation}
    \pb{z|u}=\pb{z|u'},~\forall z \in \cZ~\text{and}~u,u' \in \cN.
\end{equation}
Finally, we will show that the constructed $Z$ gives \eqref{eq:lemma-upper-multiset}, \Ie 
\begin{equation*}
    \pb{|Z| = i} = \theta_i,~\forall i=1,\ldots,\sigma+1.
\end{equation*}

\noindent \textbf{Input:} A distribution $\pb{x|u}$ for $x,u \in \cN$.

\noindent \textbf{Pre-calculation:}
\begin{enumerate}
\item \label{item:pre-first} 
For any given distribution $\pb{x|u}$, by sorting $\pb{x|u}$ for each $x \in \cN$, we can easily obtain parameters 
\[\left\{u^{(x,i)}, \lambda_i, \theta_i, \sigma: x \in \cN, i=1,\ldots,N\right\}\]
as defined in \eqref{eq:proof-order}-\eqref{eq:proof-maximum}. We will refer to these notations directly in the sequel. 

\item \label{item:pre-two} 
Then, we randomly pick a set of real numbers $\{\delta_j:j=1,\ldots,N\}$ such that
\begin{equation}
    \pb{X=j|U=u^{(j,\sigma)}} \leq \delta_j \leq \pb{X=j|U=u^{(j,\sigma+1)}}, ~\forall j\in \cN,
\end{equation}
and
\begin{equation}
    \sum_{j=1}^{N} \delta_j = 1.    
\end{equation}
The existence of such a set of $\{\delta_j : j=1,\ldots,N\}$ can be guaranteed by the definition of $\sigma$, since
\[\lambda_{\sigma} = \sum_{j=1}^{N}\pb{X=j|U=u^{(j,\sigma)}} \leq \sum_{j=1}^{N}\delta_j \leq \sum_{j=1}^{N}\pb{X=j|U=u^{(j,\sigma+1)}}= \lambda_{\sigma+1},\]
and $\lambda_{\sigma} \leq 1 < \lambda_{\sigma+1}$. 
\end{enumerate}
\emph{Specification:}
Here we specify a deterministic way of picking $\delta_j$ for $j=1,\ldots,N$. For notational simplicity, let 
$a_j = \pb{X=j|U=u^{(j,\sigma)}}$ 
and 
$b_j = \pb{X=j|U=u^{(j,\sigma+1)}}$
for $j=1,\ldots,N$. Then, provided two non-negative arrays $(a_1,\ldots,a_N)$ and $(b_1,\ldots,b_N)$ such that 
\[\sum_{j=1}^N a_j \leq 1 < \sum_{j=1}^N b_j,\]
our goal is to output an array $(\delta_1,\ldots,\delta_N)$ such that 
\[a_j \leq \delta_j \leq b_j, ~\forall j=1,\ldots,N,\]
and 
\[\sum_{j=1}^N \delta_j = 1.\]

We may choose $\delta_j$ sequentially and greedily. In particular, initialize $T = 0$. For $j=1,\ldots,N$, update $T$ by $T+\left(b_j - a_j \right)$. If  
\begin{equation*}
  T \leq 1- \sum_{j=1}^N a_j, 
\end{equation*}
then let $\delta_j = b_j$, otherwise let 
\begin{equation*}
    \delta_j = 1 - \sum_{k=1}^{j-1} b_j - \sum_{k=j+1}^{N} a_j
\end{equation*}
and $\delta_k = a_k$ for $k = j+1,\ldots,N$ to finish the process. 
Let $Q$ be an auxiliary $N \times N$ matrix which will be updated during the algorithm.
Also, let $Q^{-}_{i,j} = a$ denote $Q_{i,j} = Q_{i,j} - a$, \Ie subtracting $a$ from $Q_{i,j}$. 

\noindent \textbf{Initialization:}
Let
\begin{equation}
\label{eq:initialize}
    Q_{i,j}  =  \max\left\{\pb{X=j|U=i}-\delta_j, 0\right\}, \, i, j \in \cN.
\end{equation}

\noindent \textbf{Procedure:}
\begin{figure}[t]
    \centering
    \begin{tikzpicture}
	
	\node at (3.5,5) [rectangle,rounded corners=3mm,minimum height=1cm,minimum width=6cm, draw, fill=bleudefrance2] {};
	\node at (3.5,3.5) [rectangle,rounded corners=3mm,minimum height=1cm,minimum width=6cm, draw, fill=darkgreen2] {};
	\node at (7.2,2.5) [ rotate=90] {\Large $\dots$};
	\node at (3.5,1.5) [rectangle,rounded corners=3mm,minimum height=1cm,minimum width=6cm, draw, fill=magenta2] {};
	
	\node at (7.2,5)  {$i\!=1$};
	\node at (7.2,3.5)  {$i\!=2$};
	\node at (7.5,1.5)  {$ i\!=\!\ell\! -\! 1$};
	
	
	\draw[bleudefrance,dashed] (2,5.5) -- (2,1);
	\draw[bleudefrance,dashed] (4.5,5.5) -- (4.5,1);
	
	\draw[darkgreen,dashed] (3.5,5.5) -- (3.5,1);
	
	\draw[magenta,dashed] (5,5.5) -- (5,1);
	
	
	\draw[thick,bleudefrance] (2,5.5) -- (2,4.5);
	\draw[thick,bleudefrance] (4.5,5.5) -- (4.5,4.5);
	\node at (1.25,5) {$v_{1,1}$};
	\node at (3.25,5) {$v_{1,2}$};
	\node at (5.5,5) {$v_{1,3}$};
	
	\draw[thick,darkgreen] (3.5,4) -- (3.5,3);
	\node at (2,3.5) {$v_{2,1}$};
	\node at (5,3.5) {$v_{2,2}$};
	
	\draw[thick,magenta] (5,2) -- (5,1);
	\node at (5.75,1.5) {$v_{i,e_{i}}$};
	
	\draw[<->] (0.55,5.75) -- ( 1.95,5.75);
	\draw[<->] (2.05,5.75) -- ( 3.45,5.75);
	\draw[<->] (3.55,5.75) -- ( 4.45,5.75);
	\draw[dotted,line width= 0.3mm] (4.55,5.75) -- ( 4.95,5.75);
	\draw[<->] (5.05,5.75) -- ( 6.45,5.75);
	
	\node at (1.25,6) {$\nu_1$};
	\node at (2.75,6) {$\nu_2$};
	\node at (4,6) {$\nu_3$};
	\node at (5.75,6) {$\nu_e$};
	
\end{tikzpicture}
    \caption{The rows represents $V_1,\dots,V_{\ell-1}$. A given row $i$ is divided, by \textit{boundaries}, into $e_i$ parts of different sizes, corresponding to $v_{i,1},\dots v_{i,e_i}$, \Eg $V_1$ is divided into $v_{1,1}$, $v_{1,2}$, and $v_{1,3}$. Moreover, rows are the same size in total to satisfy \eqref{eq:proof-sum-identical}. Then, every $\nu_k$ represents the number between two consecutive \textit{boundaries}.}
    \label{fig:proof}
\end{figure}
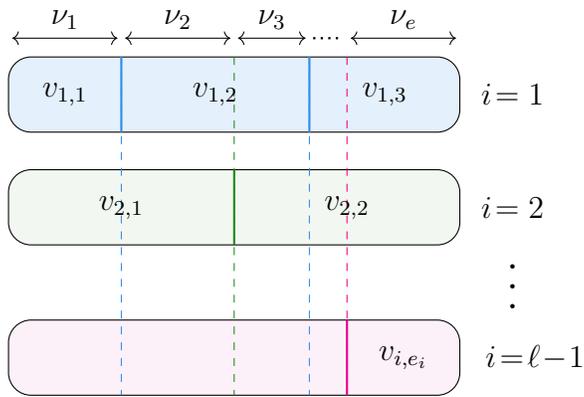
For $|Z|=\ell=1,\ldots,\sigma+1$, we consider the following process. For $x=1,\ldots,N$, identify $\{u^{(x,i)}: i =1, \ldots, \ell-1\}$. 
\begin{enumerate}
    \item For each $u^{(x,i)}$, we randomly choose a collection of pairs 
\begin{equation}
\label{eq:proof-choice}
 I_{i} \times V_{i} = \left\{\left(x_{i,j},v_{i,j}\right):j=1,2,\ldots \right\}   
\end{equation}
such that 
\begin{equation}
\label{eq:alg-t2}
  0 \leq v_{i,j} \leq Q_{u^{(x,i)},x_{i,j}},  
\end{equation}
and
\begin{equation}
\label{eq:alg-sum}
    \sum_{j} v_{i,j} = \min\left\{\delta_{x},  \pb{X=x|U=u^{(x,\ell)}}\right\}- \pb{X=x|U=u^{(x,\ell-1)}}.
\end{equation}
Note that the right-hand side of \eqref{eq:alg-sum}  only depends on $\ell$ and $x$ and is independent of $u^{(x,i)}$, which means that 
\begin{equation}
\label{eq:proof-sum-identical}
    \sum_{j=1} v_{1,j} = \cdots =  \sum_{j=1} v_{\ell-1,j},
\end{equation}
though the cardinality of $V_i$ for each $i$ may or may not be the same. For ease of notation, suppose that  
\[|I_{i}|=|V_{i}|= e_{i} \leq N.\]

After that, we update the matrix $Q$ by
\begin{equation}
\label{eq:alg-t3}
    Q_{u^{(x,i)},x_{i,j}}^{-} = v_{i,j}.
\end{equation}

It is clear from \eqref{eq:alg-t2} and \eqref{eq:alg-t3} that $Q$ is always non-negative, so the existence of such a collection $I_{i} \times V_{i}$ can be guaranteed if the following condition is satisfied
\begin{equation}
\label{eq:proof-existence}
        \sum_{k=1}^{N} Q_{u^{(x,i)},k} \geq \min\left\{\delta_{x},  \pb{X=x|U=u^{(x,\ell)}}\right\}- \pb{X=x|U=u^{(x,\ell-1)}},
\end{equation}
which will be verified in Appendix~\ref{app:proof-lemma-achievability}.

\emph{Specification:} We specify a deterministic way of choosing $I_i \times V_i$ under the assumption that \eqref{eq:proof-existence} holds. If the right-hand side of \eqref{eq:alg-sum} is zero, then one can simply choose $I_i \times V_i$ to be the empty set. If the right-hand side of \eqref{eq:alg-sum} is strictly positive, 
we initialize $T=0$ and $j=1$.
Then for $k=1,\ldots,N$ such that $ Q_{u^{(x,i)},k} >0$, 
if
\begin{equation}
  T + Q_{u^{(x,i)},k} <  \text{R.H.S of \eqref{eq:alg-sum}},  
\end{equation}
let $v_{i,j}=Q_{u^{(x,i)},k}$, $x_{i,j}=k$. Then increae $j$ by one and update $T$ by adding $Q_{u^{(x,i)},k}$ to it.  
Otherwise, let
\[ v_{i,j}= \text{R.H.S of \eqref{eq:alg-sum}} - T \] and $x_{i,j}=k$  to finish the process. 

\item
For fixed $\ell$ and $x$, given $I_{i}$ and $V_{i}$ for $i= 1,\ldots,\ell-1$, we randomly pick a collection of pairs $\left\{( \zeta_{k},\nu_{k}): k=1,2,\ldots \right\}$ such that 
\begin{equation}
\label{eq:proof-multiset}
    \zeta_k \in I_{1} \times I_{2} \times \cdots \times I_{\ell-1},
\end{equation}
and 
\begin{equation}
\label{eq:proof-multiset-value}
    \sum_{k: \zeta_k(i)=x_{i,j}} \nu_k = v_{i,j},~ \forall  1 \leq i \leq \ell-1 ~ \text{and}~ 1 \leq j \leq e_i,   
\end{equation}
where $\zeta_k(i)$ is the $i$-th element of $\zeta_k$. The existence of such a collection can be basically illustrated by Figure~\ref{fig:proof}. For notational simplicity, denote 
\[|\left\{( \zeta_{k},\nu_{k}): k=1,2,\ldots \right\}|=e_{x,\ell}.\]

\emph{Specification:}
We specify a deterministic way to construct such a collection $\left\{( \zeta_{k},\nu_{k}): k=1,2,\ldots, e \right\}$.
Let us initially push $\left(v_{1,1},v_{2,1},\ldots,v_{\ell-1,1} \right)$ and $\left(x_{1,1},x_{2,1},\ldots,x_{\ell-1,1} \right)$ into buffers $\mathrm{B}_v$ and $\mathrm{B}_x$, respectively. Let $\nu_1= \min \mathrm{B_v}$  and $\zeta_1=\mathrm{B}_x$.
Assume that the minimal value of $\mathrm{B}_v$ appears in the $m$-th position for some $m \in \{1,\ldots,\ell-1\}$, \Ie $v_{m,1}$ is the minimal. If the minimal is not unique, just randomly choose one.
We update $\mathrm{B}_v$ by subtracting $v_{m,1}$ from all elements in $\mathrm{B}_v$ and then push $v_{m,2}$ into the buffer to replace $v_{m,1}-v_{m,1}$, \Ie
\[\mathrm{B}_v=\left(v_{1,1}-v_{m,1},\ldots, v_{m,2},\ldots,v_{\ell-1,1}-v_{m,1} \right).\]
Also, update $\mathrm{B}_x$ by letting 
\[\mathrm{B}_x=\left(x_{1,1},\ldots, x_{m,2},\ldots,x_{\ell-1,1} \right).\]

Then, let $\nu_2= \min \mathrm{B}_v$ and $\zeta_2=\mathrm{B}_x$, and update $\mathrm{B}_v$ and $\mathrm{B}_x$ by the same process as stated above. Keep doing this repeatedly until all values $v_{i,j}$ for $1 \leq i \leq \ell-1$ and $1 \leq j \leq e_i$ have been dealt with. Note that \eqref{eq:proof-sum-identical} guarantees that the process ends properly. In this process, we deal with one $v_{i,j}$ every round, so we have
\begin{equation}
e_{x,\ell}=\sum_{i=1}^{\ell-1} e_i \leq (\ell-1)N.
\end{equation}

\item
For each $k=1,\ldots,e_{x,\ell}$, let $z_k=\{\zeta_k,x\}$. Then we let $\cA_{k,x,\ell}$ be a collection of tuples defined as follows:
\begin{equation}
\label{eq:proof-arguments}
\begin{aligned}
    \cA_{k,x,\ell} = & \left\{\left(\bar{z},\bar{x},\bar{u}\right): \bar{z}=z_k, \bar{x}=\zeta_k(i), \bar{u}=u^{(x,i)}, i =1,\ldots,\ell-1 \right\} \\
    & ~~~~~ \bigcup \left\{\left(\bar{z},\bar{x},\bar{u}\right):\bar{z}=z_k,\bar{x}=x,\bar{u} \in \cN\setminus \left\{u^{(x,i)}:i=1,\ldots,\ell-1 \right\} \right\},
\end{aligned}
\end{equation}
where $|\cA_{k,x,\ell}|=N$. 
To avoid ambiguity in the following discussion, denote 
\begin{equation}
\label{eq:proof-nu}
  \nu_{k,x,\ell} = \nu_k.  
\end{equation}


\item
For a fixed $\ell$, denote
\begin{equation}
\label{eq:proof-merge-set}
  \cA_{\ell} = \bigcup_{1 \leq x \leq N} \bigcup_{1 \leq k \leq e_{x,\ell}}   \cA_{k,x,\ell},  
\end{equation}
and for any $\left(\bar{z},\bar{x},\bar{u}\right) \in \cA_{\ell}$, let
\begin{equation}
\label{eq:assign}
    \qb{\bar{z},\bar{x},\bar{u}} = 
        \sum_{x=1}^{N} 
         \sum_{k:\left(\bar{z},\bar{x},\bar{u}\right) \in \cA_{k,x,\ell}}  \nu_{k,x,\ell},
\end{equation}
where $k = 1,\ldots, e_{x,\ell}$.

\end{enumerate}

\noindent \textbf{Output:} 
The output of the algorithm is $\mathrm{OUT}=\left\{\cA_{\ell},\qb{\cA_{\ell}}: \ell=1,\ldots,\sigma+1\right\}$. Later, we will see that this pair indeed stores the non-zero valued arguments and corresponding values of $\pb{z,x|u}$, \Ie
\begin{equation}
    \pb{z,x|u} =
    \begin{cases}
            \qb{z,x,u}, & (z,x,u) \in  \cA,  \\
            0, & \text{otherwise},
    \end{cases}
\end{equation}
where $\cA:= \cup_{\ell} \cA_{\ell}$. Note that $\cA_{\ell}$ are disjoint with each other since $|z|=\ell$ for any $(z,x,u) \in \cA_{\ell}$ from~\eqref{eq:proof-multiset}.

For the better illustration, we summarize the constructive proof in Algorithm~\ref{alg:algorithm}.
\begin{algorithm}
\caption{}
\label{alg:algorithm}
\begin{algorithmic}[1]
\REQUIRE A given distribution $\pb{x|u}$ for $x,u \in \cN$
\ENSURE The non-zero valued arguments $\cA=\left\{(z,x,u): x, u \in \cN, z \in \cZ, \pb{z,x|u} > 0 \right\}$ and probabilities $\qb{\cA} =\left\{\pb{z,x|u}: (z,x,u) \in \cA \right\}$ for a distribution $\pb{z,x|u}$ such that $\pb{z|u}=\pb{z}$ and $\pb{z,x|u} = 0$ for any $x \notin z$
\STATE Pre-calculation
\STATE Initialize 
\FOR{$\ell=1,\ldots,\sigma+1$} 
\FOR{$x \in \cN$} 
\FOR{$u \in \{u^{(x,i)}:i=1,\ldots,\ell-1\}$}

\STATE Find a collection of pairs $I_i \times V_i$ satisfying \eqref{eq:alg-t2} and \eqref{eq:alg-sum}

\ENDFOR
\STATE Given $\left\{I_i \times V_i:i=1,\ldots, \ell-1\right\}$, find a collection of pairs $\left\{(\zeta_{k},\nu_{k}): k=1,2,\ldots , e_{x,\ell}\right\}$ satisfying  
\eqref{eq:proof-multiset} and \eqref{eq:proof-multiset-value}
 
\STATE Obtain $\left\{\cA_{k,x,\ell}, \nu_{k,x,\ell}: k=1,\ldots, e_{x,\ell} \right\}$ from \eqref{eq:proof-arguments} and \eqref{eq:proof-nu}

\ENDFOR

\STATE Merge $\left\{\cA_{k,x,\ell}: x \in \cN, k=1,\ldots, e_{x,\ell} \right\}$ to obtain $\cA_{\ell}$ and corresponding values $\qb{\cA_{\ell}}$ from \eqref{eq:proof-merge-set}
and \eqref{eq:assign}
\ENDFOR

\STATE  $\mathrm{OUT}=\left\{\cA_{\ell},\qb{\cA_{\ell}}: \ell=1,\ldots,\sigma+1\right\}$
\end{algorithmic}
\end{algorithm}

\section{Linear Programming Perspective}
\label{sec:LP} Inspired by the proposed scheme in the last section, we restrict our discussion to uncoded queries. Then the key step is to design a query encoding function $\wb{q_t|x_{\tau},x_t,q_{[t-1]}}$, that minimizes the download cost $\mathbb{E}\left[|Q_t|\right]$ subject to two constraints, 
 \Ie the decodability constraint \eqref{eq:achieve-two-decode} and a relaxed privacy constraint \eqref{eq:privacy} (protecting the last time when privacy was ON).
 
 For any given $q_{[t-1]}$, or more precisely given the input distribution $\pb{x_t|x_{\tau},q_{[t-1]}}$, 
the problem can then be alternatively formulated as a linear programming (LP) instance as follows,
\begin{equation}
\label{eq:LP}
    \begin{aligned}
    & \underset{\pb{q_t|x_{\tau},x_t,q_{[t-1]}}}{\text{minimize}}
    & & \mathbb{E}\left[|Q_t||q_{[t-1]}\right] = \sum_{q_t} \pb{q_t|q_{[t-1]}}|q_t| &\\
    & \text{subject to}
    & & \pb{x_t,q_t|q_{[t-1]}} =0, \ x_t \notin q_t,  & \text{(decodability)}\\
    & & & \pb{q_t|x_{\tau},q_{[t-1]}} = \pb{q_t|q_{[t-1]}}. & \text{(relaxed privacy)}\\
    \end{aligned}
\end{equation}

This linear programming problem has
$N^2\,2^N$ variables and $(N+2) N\,2^{N-1}$ constraints, \Ie each probability $\pb{q_t,x_t,x_{\tau}|q_{[t-1]}}$ is a variable where $x_t,x_{\tau} \in \cN$ and $q_t \in \sP{(\cN)}$.
The scale of the problem is intractable in complexity with any generic linear programming solver, for instance Vaidya's algorithm \cite{Vaidya} gives  $\cO\left(\left(N^2\,2^{N}\right)^{2.5}\right)$. 

One possible strategy dealing with the complexity issue is to impose a restriction on the cardinality of $q_t$, \Ie $|q_t|$ is chosen from $\{1,2,\ldots,c,N\}$ where $c$ is a constant and $N$ is included to guarantee the problem is feasible. 
\begin{equation}
\label{eq:LP2}
    \begin{aligned}
    & \underset{\pb{q_t|x_{\tau},x_t,q_{[t-1]}}}{\text{minimize}}
    & & \mathbb{E}\left[|Q_t||q_{[t-1]}\right] = \sum_{q_t} \pb{q_t|q_{[t-1]}}|q_t| &\\
    & \text{subject to}
    & & \pb{x_t,q_t|q_{[t-1]}} =0, \ x_t \notin q_t,  & \\
    & & & \pb{q_t|x_{\tau},q_{[t-1]}} = \pb{q_t|q_{[t-1]}}, & \\
    & & & |q_t| \in \{1,2,\ldots,c,N\}. & 
    \end{aligned}
\end{equation}
In this way, the number of variables drops dramatically as  the alphabet of $q_t$ is reduced from $2^N$ to the order of $N^c$, \Ie setting $\pb{q_t|x_{\tau},x_t,q_{[t-1]}}=0$ for $|q_t|=\{c+1,\ldots,N-1\}$.  Then, the LP instance roughly has $N^{c+2}$ variables, which makes solving the problem numerically possible. 
For instance if we choose  $c=1$, \Ie the user either downloads the message he wants or all messages on the server, we can obtain the optimal value to \eqref{eq:LP2}, which is
\begin{equation}
\label{eq:LP-extreme}
     \mathbb{E}\left[|Q_t||q_{[t-1]}\right] =  \theta_1(q_{[t-1]}) 
     + N \left(1- \theta_1(q_{[t-1]})\right),  
\end{equation} 
where
$\theta_1$ was previously defined (c.f.\eqref{eq:def_theta}) to be 
\[\sum_{x \in \cN} \min_{x_{\tau}} \pb{X_t=x|X_{\tau} = x_{\tau},Q_{[t-1]} = q_{[t-1]}}.\]

Instead of attempting to solve the linear programming problem numerically, Lemma~\ref{lemma:algorithm} in the last section actually identifies a feasible solution to the problem \eqref{eq:LP} \emph{efficiently}, and bounds the objective $\mathbb{E}\left[|Q_t||q_{[t-1]}\right]$ \emph{analytically}, \Ie a feasible solution attains an objective such that 
\begin{equation}
\label{eq:LP-feasible}
    \mathbb{E}\left[|Q_t||q_{[t-1]}\right] \leq  \sum_{i=1}^{N} i \, \theta_i(q_{[t-1]}).
\end{equation}
One can easily see that \eqref{eq:LP-feasible} outperforms \eqref{eq:LP-extreme}.  

A helpful observation here is that any algorithmic tractable solution should only visit a small proportion of the power set, i.e, the support set of the query $q_t$. Otherwise, since the power set is exponentially large, it will introduce an exponential overhead for configuring the probabilities $p\left(q_t|x_{\tau},x_t,q_{[t-1]}\right)$ for $x_{\tau}, x_t \in \mathcal{N}$ and $q_t \in \mathscr{P}(\mathcal{N})$. 

\section{Proof of Tightness for $N=2$ in Theorem \ref{theorem:two}}
\label{sec:proof_for_2}
In this section, we revisit the case  $N=2$, which was first studied in \cite{Naim_2019}. As previously stated, we will show the bounds obtained in Theorem~\ref{theorem:outer} and Theorem~\ref{theorem:inner} are tight for the case $N=2$. 
We will give an alternate proof to the specially designed one for $N=2$ presented in \cite{Naim_2019}, which   relies on the general results presented in Theorem~\ref{theorem:outer} and Theorem~\ref{theorem:inner}.

Before starting the proof we discuss some consequences of Theorem \ref{theorem:two}. We have the following observations.
\begin{itemize}
	\item If $F_t=\text{ON}$, then $\tau = t$ from the definition of $\tau$, then $\frac{1}{R_t} \geq 2$. This means that it is necessary to download both messages, which is consistent with the well-known result for the single server PIR \cite{Chor_1995}.
	\item If $F_t=\text{OFF}$, it is possible for the user to download less than two messages since $0 \leq \alpha+\beta \leq 2$. We can see that the rate as a function of $\alpha$ and $\beta$ is symmetric around $\alpha+\beta=1$. When $\alpha+\beta=1$, the Markov chain is independent, \Ie the user's requests are independent, the user can directly ask for the desired message, and the rate is $R_t=1$ (maximum).  When $\alpha=\beta=0$ or $\alpha=\beta=1$, \Ie the Markov chain is not ergodic, the user is required to ask for both messages, and then the rate is $R_t=1/2$ (minimum). Another observation is that when the Markov chain is ergodic, the rate goes to $1$ when $t-\tau$ goes to infinity. Intuitively, as $t-\tau$ grows, the information carried by $X_t$ about $X_{\tau}$ decreases, so the user can eventually directly ask for the desired message without being concerned about leaking information about $X_t$.  
\end{itemize}

\subsection{Converse}
\label{sec:two-converse}	
It is sufficient to show that the right-hand side of \eqref{eq:corollary} equals to $1+ |1- \alpha -\beta|^{t-\tau}$. We first write the right-hand side of \eqref{eq:corollary} explicitly in terms of $\alpha$ and $\beta$. If $\alpha + \beta = 0$, then $\alpha=\beta=0$, and we have 
\begin{equation*}
 P^{t-\tau} = P = 
\begin{bmatrix}
     1  &  0         \\
       0  & 1 
\end{bmatrix},
\end{equation*} 
which yields
\begin{equation}
 \label{eq:two-identity}
 \sum_{x_t \in \cN} \max_{x_{\tau} \in \cN} p\left(x_t|x_{\tau}\right) = 2.
\end{equation} 

If $\alpha + \beta \neq 0$, $p\left(x_t|x_{\tau}\right)$ is given by the transition matrix $ P^{t-\tau}$, \Ie
\begin{equation}
 P^{t-\tau} = \frac{1}{\alpha+ \beta}
\begin{bmatrix}
     \beta + \alpha(1- \alpha - \beta)^{t-\tau}  
       &  \alpha- \alpha(1- \alpha - \beta)^{t-\tau}         \\
       \beta - \beta(1- \alpha - \beta)^{t-\tau}  & \alpha + \beta (1- \alpha - \beta)^{t-\tau} 
\end{bmatrix}.
\end{equation} 
Then, we have
\[\sum_{x_t} \max_{x_{\tau}} p\left(x_t|x_{\tau}\right) = 
\begin{cases}
1 + (1- \alpha - \beta)^{t-\tau},  &  (1- \alpha - \beta)^{t-\tau}   \geq   0,\\
1 - (1- \alpha - \beta)^{t-\tau},  &  (1- \alpha - \beta)^{t-\tau}   <  0,
\end{cases}
\]
which can also be written as 
\begin{equation}
\label{eq:two-nonidentity}
	\sum_{x_t} \max_{x_{\tau}} p\left(x_t|x_{\tau}\right) = 1+ |1- \alpha - \beta|^{t-\tau}.
\end{equation}

By combining \eqref{eq:two-identity} and \eqref{eq:two-nonidentity}, we get that $\sum_{x_t} \max_{x_{\tau}} p\left(x_t|x_{\tau}\right) = 1+ |1- \alpha - \beta|^{t-\tau}$
for any given $\alpha$ and $\beta$. Therefore, we have
\[ \frac{1}{R_t} \geq \sum_{x_t} \max_{x_{\tau}} p\left(x_t|x_{\tau}\right) = 1+ |1- \alpha - \beta|^{t-\tau}, \]
which completes the converse proof.

\subsection{Achievability}
\label{sec:two-achievability}
From Theorem~\ref{theorem:inner}, we know that the rate $R_t$ is achievable if 
\begin{equation}
\label{eq:achieve-two-rate-1}
	\frac{1}{R_t} \geq \sum_{q_{[t-1]}} p(q_{[t-1]}) \sum_{i=1}^{N} i \, \theta_i(q_{[t-1]}).
\end{equation}
Since $\lambda_1(q_{[t-1]}) \leq 1$ and $\lambda_2(q_{[t-1]}) \geq 1$ for $N=2$, \eqref{eq:achieve-two-rate-1} can be rewritten as  
\begin{equation}
\label{eq:achieve-two-rate}
	\frac{1}{R_t} \geq \sum_{q_{[t-1]}} p(q_{[t-1]}) \left( 2 - \sum_{x_t} \min_{x_{\tau} } \pb{x_t|x_{\tau},q_{[t-1]}}   \right).
\end{equation}

In this subsection, we will express the right-hand side of \eqref{eq:achieve-two-rate} explicitly in terms of $\alpha$ and $\beta$, and we will show that it is exactly equal to $1+ |1- \alpha -\beta|^{t-\tau}$, as given in \eqref{eq:theorem-two}. Also, we will explicitly illustrate the encoding function $\wb{q_t|x_t,x_{\tau},q_{[t-1]}}$, which is exactly the same as the one presented in~\cite{Naim_2019}.

From the discussion in Section~\ref{section:inner}, we 
can 
infer that the query encoding function $\wb{q_t|x_t,x_{\tau},q_{[t-1]}}$ is given by
\begin{equation}
\label{eq:encoding-function-two}
\wb{q_t|x_t,x_{\tau},q_{[t-1]}} = 
	\begin{cases}
		\frac{\pi\left(x_t,q_{[t-1]} \right)}{\pb{x_t|x_{\tau},q_{[t-1]}}}, & |q_t|= 1, \\
		1- \frac{\pi\left(x_t,q_{[t-1]} \right)}{\pb{x_t|x_{\tau},q_{[t-1]}}}, & |q_t| = 2,
	\end{cases}
\end{equation}
where $\pi\left(x_t,q_{[t-1]} \right)$ is defined by
\[\pi\left(x_t,q_{[t-1]} \right): = \min_{x_{\tau} \in \{1,2\}} \pb{x_t|x_{\tau},q_{[t-1]}}.\]

Since $q_t \neq \bar{x_t}$ is always true (c.f.\eqref{eq:achieve-two-decode}), where $\bar{x}_t$ is the complement of $x_t$ in the set $\{1,2\}$,  \eqref{eq:encoding-function-two} is well-defined for any $q_t \in \left\{\{1\},\{2\},\{1,2\}\right\}$. As consequences, 
\begin{enumerate}
\item 
	When $F_t=\text{ON}$, $\tau= t$ by definition, and 
	\begin{equation}
	\label{eq:achieve-two-on}
		\min_{x_{\tau}}\pb{x_t|x_{\tau},q_{[t-1]}} = \min_{x'_{t}}\pb{x_t|x'_{t},q_{[t-1]}} =  0.
	\end{equation}
	This immediately implies that
	\begin{equation}
	\label{eq:ON}
	\wb{q_t|x_t,x_{\tau},q_{[t-1]}} = 
		\begin{cases}
			0, & |q_t|= 1, \\
			1, & |q_t| = 2,
		\end{cases}
	\end{equation} 
	for any $x_t$ and $q_{[t-1]}$, which means that the user will always download two messages when $F_t=\text{ON}$, \Ie
	\begin{equation}
	\label{eq:achieve-two-on-query}
		\pb{|Q_t|=2} = 1.
	\end{equation}

\item When $F_t=\text{OFF}$, $\tau \neq t$ by definition. Let
\begin{equation}
	\hat{x}_{\tau}(x_t,q_{[t-1]}) = \arg\min_{x_{\tau}} \pb{x_t|x_{\tau},q_{[t-1]}}
\end{equation}
for any $x_t$ and $q_{[t-1]}$. For notational simplicity, $\hat{x}_{\tau}(x_t,q_{[t-1]})$ will be written as $\hat{x}_{\tau}$ when $x_t$ and $q_{[t-1]}$ are clear from context. 
As such, we can see that
\begin{equation}
\label{eq:achieve-two-encf-min}
\wb{q_t|x_t,\hat{x}_{\tau},q_{[t-1]}} = 
\begin{cases}
		1, & |q_t|= 1, \\
		0, & |q_t| = 2.
	\end{cases}
\end{equation}
	\begin{itemize}
		\item If $\hat{x}_{\tau}(x_t,q_{[t-1]})$ is unique, since $X_{\tau}$ and $X_{t}$ take values in the binary alphabet, it is easy to check that 
	\begin{equation}
	 	\hat{x}_{\tau}(x_t,q_{[t-1]}) \neq \hat{x}_{\tau}(\bar{x_t},q_{[t-1]})
	\end{equation} 
	for any given $q_{[t-1]}$. This implies that $x_{\tau}$ and $x_{t}$ can be determined from each other provided that $|q_t|=2$. 
	In particular, assume that $|q_{t-1}|=2$, which implies that $F_{t-1}=\text{OFF}$. We know that $x_{\tau}$ and $x_{t-1}$ can be determined by each other provided that $|q_{t-1}|=2$, and hence we can easily obtain that
	\begin{equation}
	\label{eq:achieve-two-p1}
		\sum_{x_t} \pi\left(x_t,q_{[t-1]} \right) = \sum_{x_t} \min_{x_{\tau}}\pb{x_t|x_{\tau},q_{[t-1]}} = \sum_{x_t} \min_{x_{t-1}}\pb{x_t|x_{t-1}}.
	\end{equation}
	Correspondingly, we have
	\begin{align}
		\pb{|q_{t}|=1||q_{t-1}|=2}
		& =  \sum_{q_{[t-2]}} \sum_{x_{\tau}}\pb{x_{\tau},q_{[t-2]}|q_{t-1}} \sum_{x_t}\pb{q_t,x_t|x_{\tau},q_{[t-1]}} \nonumber \\
		& = \sum_{q_{[t-2]}} \sum_{x_{\tau}}\pb{x_{\tau},q_{[t-2]}|q_{t-1}} \sum_{x_t} \pi\left(x_t,q_{[t-1]} \right) \nonumber \\
		& = \sum_{q_{[t-2]}} \sum_{x_{\tau}}\pb{x_{\tau},q_{[t-2]}|q_{t-1}} 
		\sum_{x_t} \min_{x_{t-1}}\pb{x_t|x_{t-1}} \nonumber \\
		& = \sum_{x_t} \min_{x_{t-1}}\pb{x_t|x_{t-1}}. \label{eq:two-q-two}
	\end{align}
		\item If $\hat{x}_{\tau}(x_t,q_{[t-1]})$ is not unique, \Ie $\pb{x_t|x_{\tau},q_{[t-1]}} = \pb{x_t|\bar{x}_{\tau},q_{[t-1]}}$, then we can easily see that 
		\begin{equation}
		\wb{q_t|x_t,x_{\tau},q_{[t-1]}} = 
		\begin{cases}
				1, & |q_t|= 1, \\
				0, & |q_t| = 2,
			\end{cases}
		\end{equation}
		for any $x_{\tau} \in \{1,2\}$. In particular, if $|q_{t-1}| = 1$, implying that $F_{t-1}=\text{OFF}$, then $\tau < t-1$ by definition, and hence from the fact $x_{t-1}=q_{t-1}$ when $|q_{t-1}| = 1$,
		we can obtain that 
		\[\hat{x}_{\tau}(x_t,q_{[t-1]}) = \arg\min_{x_{\tau}} \pb{x_t|x_{\tau},q_{[t-1]}} = \arg\min_{x_{\tau}} \pb{x_t|x_{t-1}},~ x_{t-1} = q_{t-1}.\]
		We can easily see that $\hat{x}_{\tau}(x_t,q_{[t-1]})$ is not unique in this case, which implies that
		\begin{equation}
		\label{eq:achieve-two-p2}
		   	\sum_{x_t} \pi\left(x_t,q_{[t-1]} \right) = \sum_{x_t} \min_{x_{\tau}}\pb{x_t|x_{\tau},q_{[t-1]}} 
			=\sum_{x_t}  \pb{x_t|x_{t-1}} = 1,
		\end{equation}   
		and
		\begin{equation}
		\label{eq:two-q-one}
			\pb{|q_{t}|=1||q_{t-1}|=1} = 1.
		\end{equation}		
	\end{itemize}
\end{enumerate}

In summary, 
\begin{enumerate}
\item When $F_t=\text{ON}$, we have 
	\[\pi\left(x_t,q_{[t-1]}\right)  =  0,\] 
	and hence by substituting in \eqref{eq:achieve-two-rate}, we can see that
	\begin{equation}
		\label{eq:achieve-two-temple-tt1}
		\frac{1}{R_t} \geq \sum_{q_{[t-1]}} p(q_{[t-1]}) \left( 2 - \sum_{x_t} \min_{x_{\tau} } \pb{x_t|x_{\tau},q_{[t-1]}}   \right) = 2 \sum_{q_{[t-1]}} p(q_{[t-1]}) = 2.
	\end{equation}
\item When $F_t=\text{OFF}$, we have from \eqref{eq:achieve-two-p1} and \eqref{eq:achieve-two-p2} that 
\begin{equation}
	\sum_{x_t} \pi\left(x_t,q_{[t-1]} \right) = 
	\begin{cases}
		1, & |q_{t-1}|=1, \\
	\sum_{x_t} \min_{x_{t-1}}\pb{x_t|x_{t-1}}, & |q_{t-1}|=2. 
	\end{cases}
\end{equation}	

By substituting in \eqref{eq:achieve-two-rate}, we get that
\begin{align}
	\frac{1}{R_t} 
	& \geq \sum_{q_{[t-1]}} p(q_{[t-1]}) \left( 2 - \sum_{x_t} \min_{x_{\tau} } \pb{x_t|x_{\tau},q_{[t-1]}}   \right) \nonumber \\
	& = \pb{|Q_{t-1}|=1} \times 2 +  \pb{|Q_{t-1}|=2} \left( 2-   \sum_{x_t} \min_{x_{t-1}}\pb{x_t|x_{t-1}}\right)  \nonumber \\
	& = 2 - \pb{|Q_{t-1}|=2} \left( \sum_{x_t} \min_{x_{t-1}}\pb{x_t|x_{t-1}}\right). \label{eq:achieve-two-temple-t1}
\end{align}

From \eqref{eq:two-q-two} and \eqref{eq:two-q-one}, we can easily see the following proposition.
\begin{proposition}
\label{lemma:two-query-markov}
	$\left\{|Q_i|: \tau \leq i \leq t\right\}$ forms a Markov chain, and the transition matrix is given by
\begin{equation}
\label{eq:two-query-markov}
	\begin{bmatrix}
		1 & 0 \\
		\sum_{x_t}  \min_{x_{t-1}}  \pb{x_t|x_{t-1}} & 1- \sum_{x_t}  \min_{x_{t-1}}  \pb{x_t|x_{t-1}}
	\end{bmatrix}.
\end{equation}
\end{proposition}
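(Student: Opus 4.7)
The plan is to show that the conditional distribution $\Pr(|Q_i| \mid Q_{[i-1]})$ depends on the past only through $|Q_{i-1}|$, and then read off the transition probabilities from equations \eqref{eq:two-q-two} and \eqref{eq:two-q-one}. Throughout, fix $\tau \leq i \leq t$. At $i=\tau$ we have $F_\tau=\text{ON}$ so by \eqref{eq:achieve-two-on-query} the chain starts deterministically in state $|Q_\tau|=2$; for $\tau<i\leq t$ we have $F_i=\text{OFF}$ by the definition of $\tau$, which is the regime for which a nontrivial transition law must be derived.

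First I would marginalize the query encoding function \eqref{eq:encoding-function-two} over $X_i$ and $X_\tau$ to obtain
\[
\Pr\bigl(|Q_i|=1 \,\big|\, Q_{[i-1]}=q_{[i-1]}\bigr)
 = \sum_{x_i} \pi\bigl(x_i,q_{[i-1]}\bigr),
\]
using that the coefficient $\pi(x_i,q_{[i-1]})/p(x_i\mid x_\tau,q_{[i-1]})$ is exactly designed so that the marginal over $x_\tau$ collapses to $\pi(x_i,q_{[i-1]})$. The heart of the argument is then the identities \eqref{eq:achieve-two-p1} and \eqref{eq:achieve-two-p2}: when $|q_{i-1}|=2$ the minimizer $\hat x_\tau(x_i,q_{[i-1]})$ is unique, $x_\tau$ and $x_{i-1}$ are in bijection given $|q_{i-1}|=2$, and the sum reduces to $\sum_{x_i}\min_{x_{i-1}} p(x_i\mid x_{i-1})$; when $|q_{i-1}|=1$ the minimizer is not unique, $x_{i-1}=q_{i-1}$, and the sum collapses to $1$. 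In both cases the right-hand side depends on $q_{[i-1]}$ only through $|q_{i-1}|$.

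Next I would invoke the tower property: for any realization $(|q_\tau|,\dots,|q_{i-1}|)$,
\[
\Pr\bigl(|Q_i|=1 \,\big|\, |Q_\tau|,\dots,|Q_{i-1}|\bigr)
= \mathbb{E}\!\left[\Pr\bigl(|Q_i|=1 \,\big|\, Q_{[i-1]}\bigr) \,\Big|\, |Q_\tau|,\dots,|Q_{i-1}|\right],
\]
and since the inner conditional is a function of $|Q_{i-1}|$ alone, the outer conditional equals $\Pr(|Q_i|=1\mid |Q_{i-1}|)$. This establishes the Markov property of $\{|Q_i|: \tau\leq i \leq t\}$. The two transition probabilities then come directly from \eqref{eq:two-q-two} and \eqref{eq:two-q-one}, and their complements give the entries $\Pr(|Q_i|=2\mid|Q_{i-1}|=2)=1-\sum_{x_i}\min_{x_{i-1}}p(x_i\mid x_{i-1})$ and $\Pr(|Q_i|=2\mid|Q_{i-1}|=1)=0$; note that the transition law depends on $i$ only through the time-invariant one-step Markov kernel of $\{X_t\}$, so the chain is time-homogeneous, matching the matrix in \eqref{eq:two-query-markov}.

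The main obstacle I anticipate is the case split on uniqueness of $\hat x_\tau(x_i,q_{[i-1]})$: one must be careful that the identity $\sum_{x_i}\pi(x_i,q_{[i-1]})=\sum_{x_i}\min_{x_{i-1}}p(x_i\mid x_{i-1})$ in the $|q_{i-1}|=2$ branch does not secretly rely on knowing $x_\tau$ beyond what $|q_{i-1}|=2$ already determines. Proposition~\ref{proposition:achieve-markov}, together with the bijection between $x_\tau$ and $x_{i-1}$ induced by the scheme when $|q_{i-1}|=2$ (and the fact that the two-state Markov kernel is the same at every step), is exactly what ensures this reduction; once that bijection is made explicit, the rest of the computation is bookkeeping.
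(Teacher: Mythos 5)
Your proposal is correct and follows essentially the same route as the paper, which treats the proposition as an immediate consequence of \eqref{eq:two-q-two} and \eqref{eq:two-q-one}; you simply make the implicit tower-property step explicit and observe time-homogeneity. One small misattribution: the reduction $\sum_{x_i}\pi(x_i,q_{[i-1]})=\sum_{x_i}\min_{x_{i-1}}\pb{x_i|x_{i-1}}$ in the $|q_{i-1}|=2$ branch is justified by the bijection between $x_\tau$ and $x_{i-1}$ that the encoding rule \eqref{eq:encoding-function-two} induces (as used to derive \eqref{eq:achieve-two-p1}), not by Proposition~\ref{proposition:achieve-markov}, which concerns conditional independence of $X_{\cB_t}$ from $Q_t$; since you already name the bijection as the operative fact, the argument still goes through.
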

	
From the definition of $\tau$, we know that $F_{\tau}=\text{ON}$, and hence $p\left(|Q_{\tau}|=2\right) = 1$ from \eqref{eq:achieve-two-on-query}. Then from Proposition~\ref{lemma:two-query-markov}, we have
\begin{align*}
	p\left(|Q_{t-1}|=2\right) = 
	 \left(1-\sum_{x_t} \min_{x_{t-1}} \pb{x_t|x_{t-1}}\right)^{t-1-\tau}. 
\end{align*}

Hence, \eqref{eq:achieve-two-temple-t1} can be written as 
\begin{align}
	\frac{1}{R_t} 
	& \geq 2 - \pb{|Q_{t-1}|=2} \left( \sum_{x_t} \min_{x_{t-1}}\pb{x_t|x_{t-1}}\right) \nonumber\\
	& = 2 -  \left( \sum_{x_t} \min_{x_{t-1}}\pb{x_t|x_{t-1}}\right)\left(1-\sum_{x_t} \min_{x_{t-1}} \pb{x_t|x_{t-1}}\right)^{t-1-\tau}. \label{eq:achieve-two-temple-tt2}
\end{align}

\end{enumerate}

By substituting $\alpha$ and $\beta$ in \eqref{eq:achieve-two-temple-tt1} and \eqref{eq:achieve-two-temple-tt2}, we can easily check that both inequalities can be written as  
\[\frac{1}{R_t} \geq    |1- \alpha- \beta|^{t-\tau}. \]
Moreover, one can also check that the encoding function $\wb{q_t|x_t,x_{\tau},q_{[t-1]}}$ given in \eqref{eq:encoding-function-two} can be expressed as in Table~\ref{table:Qt} which was first presented in \cite{Naim_2019}.

\appendices

\section{Proof of Proposition~\ref{proposition:achieve-markov}}
\label{appendix:proposition-achieve-markov}
It is clear that we need to show that 
\[I \left(X_{\tau};Q_t|Q_{[t-1]}\right) = 0\]
implies that 
\[	I \left(X_{\cB_t};Q_t|Q_{[t-1]}\right) = 0.\]

Consider
\begin{align*}
	I \left(X_{\cB_t};Q_t|Q_{[t-1]}\right)
	& = I \left(X_{\tau};Q_t|Q_{[t-1]}\right) + I \left(X_{\cB_t\backslash\{\tau\}};Q_t|X_{\tau},Q_{[t-1]}\right) \\
	& \utag{a}{\leq} I \left(X_{\tau};Q_t|Q_{[t-1]}\right) + I \left(X_{\cB_t\backslash\{\tau\}};X_t,S_t|X_{\tau},Q_{[t-1]}\right) \\
	& \utag{b}{=} I \left(X_{\tau};Q_t|Q_{[t-1]}\right) + I \left(X_{\cB_t\backslash\{\tau\}};X_t|X_{\tau},Q_{[t-1]}\right), 
\end{align*}
where \uref{a} follows because $Q_t$ is a function of $\{X_{\tau},X_t,S_t,Q_{[t-1]}\}$, and \uref{b} follows because the local randomness is generated according to $p_{X_t,X_{\tau},Q_{[t-1]}}$.

It remains to show that 
\[I \left(X_{\cB_t\backslash\{\tau\}};X_t|X_{\tau},Q_{[t-1]}\right) = 0,\]
which can be justified as follows:
\begin{align*}
	I \left(X_{\cB_t\backslash\{\tau\}};X_t|X_{\tau},Q_{[t-1]}\right)
	& = H \left(X_{\cB_t\backslash\{\tau\}}|X_{\tau},Q_{[t-1]}\right) - H \left(X_{\cB_t\backslash\{\tau\}}|X_t,X_{\tau},Q_{[t-1]}\right)  \\
	& \leq H \left(X_{\cB_t\backslash\{\tau\}}|X_{\tau},Q_{[t-1]}\right) - H \left(X_{\cB_t\backslash\{\tau\}}|X_t,X_{\tau},Q_{[\tau-1]},X_{[\tau:t-1]},S_{[\tau:t-1]}\right) \\
	& \utag{c}{=} H \left(X_{\cB_t\backslash\{\tau\}}|X_{\tau},Q_{[t-1]}\right) - H \left(X_{\cB_t\backslash\{\tau\}}|X_t,X_{\tau},Q_{[\tau-1]},X_{[\tau:t-1]}\right) \\
	& = H \left(X_{\cB_t\backslash\{\tau\}}|X_{\tau},Q_{[t-1]}\right) - H \left(X_{\cB_t\backslash\{\tau\}}|X_{\tau},Q_{[\tau-1]},X_{[\tau+1:t]}\right) \\
	& \utag{d}{=} H \left(X_{\cB_t\backslash\{\tau\}}|X_{\tau},Q_{[\tau-1]}\right) - H \left(X_{\cB_t\backslash\{\tau\}}|X_{\tau},Q_{[\tau-1]}\right) \\
	& = 0,
\end{align*}
where \uref{c} follows because $S_{[\tau:t-1]}$ is independent of $X_{\cB_t\backslash\{\tau\}}$ given $\left\{X_{[\tau:t-1]},Q_{[t-1]} \right\}$, and \uref{d} follows from the markovity of $\{X_t:t \in \mathbb{N}\}$.

\section{Proof of Corollary~\ref{corollary:outer}}
\label{appendix:corollary-outer}
Recall that the inequality that needs to be shown is 
\begin{equation}
\label{eq:appendix-corollary}
    \sum_{q_{[t-1]}} \pb{q_{[t-1]}} \sum_{x_t} \max_{x_{\tau}} \pb{x_t|x_{\tau},q_{[t-1]}} \geq \sum_{x_t} \max_{x_{\tau}} \pb{x_t|x_{\tau}}.
\end{equation}

Since
\begin{align*}
    \sum_{q_{[\tau]}} \pb{q_{[\tau]}} \sum_{x_t} \max_{x_{\tau}} \pb{x_t|x_{\tau},q_{[\tau]}} = \sum_{x_t} \max_{x_{\tau}} \pb{x_t|x_{\tau}},
\end{align*}
where the equality follows because $Q_{[\tau]}$ is a stochastic function of $X_{[\tau]}$, and hence $Q_{[\tau]}$ is independent of $X_t$ given $X_{\tau}$, \Ie
\begin{equation}
\label{eq:appendix-A-Markov}
    Q_{[\tau]} \rightarrow X_{\tau} \rightarrow X_t,
\end{equation}
due to the Markovity of $\{X_i:i \in \mathbb{N}\}$.
Thus, we can easily see that \eqref{eq:appendix-corollary} holds for $t=\tau+1$.

For any $i \in [\tau+2:t]$, consider
\begin{align}
    \sum_{q_{[i-1]}} \pb{q_{[i-1]}} \sum_{x_t} \max_{x_{\tau}} \pb{x_t|x_{\tau},q_{[i-1]}}
    & \geq \sum_{q_{[i-2]}} \pb{q_{[i-2]}} \sum_{x_t} \max_{x_{\tau}} \sum_{q_{i-1}} \pb{q_{i-1}|q_{[i-2]}} \pb{x_t|x_{\tau},q_{[i-1]}} \nonumber \\
    & \utag{a}{=} \sum_{q_{[i-2]}} \pb{q_{[i-2]}} \sum_{x_t} \max_{x_{\tau}} \sum_{q_{i-1}} \pb{q_{i-1}|q_{[i-2]},x_{\tau}} \pb{x_t|x_{\tau},q_{[i-1]}} \nonumber \\
    & = \sum_{q_{[i-2]}} \pb{q_{[i-2]}} \sum_{x_t} \max_{x_{\tau}}  \pb{x_t|x_{\tau},q_{[i-2]}}, \label{eq:appendix-corollary-t1}
\end{align}
where \uref{a} follows from the privacy at time $i-1$.

Since \eqref{eq:appendix-corollary-t1} holds for any $i \in [\tau+2:t]$, we can easily obtain that
\[\sum_{q_{[t-1]}} \pb{q_{[t-1]}} \sum_{x_t} \max_{x_{\tau}} \pb{x_t|x_{\tau},q_{[t-1]}} \geq  \sum_{q_{[\tau]}} \pb{q_{[\tau]}} \sum_{x_t} \max_{x_{\tau}} \pb{x_t|x_{\tau},q_{[\tau]}} = \sum_{x_t} \max_{x_{\tau}} \pb{x_t|x_{\tau}},\]
where the last step follows because of the Markov chain
$Q_{[\tau]} \rightarrow X_{\tau} \rightarrow X_t$, 
as in \eqref{eq:appendix-A-Markov}.
 
This completes the proof.

\section{Proof of Lemma~\ref{lemma:lower}}
\label{Appendix-lemma-lower}
Consider 
    \begin{align*}
        \max_{u \in \cN} p\left(x|u\right) 
        & \utag{a}{=} \max_{u \in \cN} \sum_{y:x \in y} p\left(x, y|u\right) \\
        & = \max_{u \in \cN} \sum_{y:x \in y} p\left(y|u\right) p\left(x|y,u\right) \\
        & \utag{b}{=} \max_{u \in \cN} \sum_{y:x \in y} p\left(y\right) p\left(x| y,u\right) \\
        & \leq \sum_{y:x \in y} p\left(y\right) \max_{u \in \cN}  p\left(x| y,u\right) \\
        & \leq \sum_{y:x \in y} p\left(y\right), 
    \end{align*}
where \uref{a} follows from $p(x,y)=0$ for $x \notin y$, and \uref{b} follows because $Y$ is independent of $U$.

Thus, we obtain that 
\begin{align*}
     \sum_{x \in \cN} \max_{u \in \cN} p\left(x|u\right) 
    & \leq  \sum_{x \in \cN} \sum_{y:x \in y} p\left(y\right) \\
    & =  \sum_{y \in \sP\left(\cX\right)} \sum_{x:x \in y} p\left(y\right) \\
    & =  \sum_{y \in \sP\left(\cN\right)} p\left(y\right) \sum_{x:x \in y} 1 \\
    & =  \sum_{y \in \sP\left(\cN\right)} p(y) |y| \\
    & = \mathbb{E}\left[|Y|\right],
\end{align*}
which completes the proof.

\section{Justification of the algorithm for Lemma~\ref{lemma:algorithm}}
\label{app:proof-lemma-achievability}

\subsection{Verification of \eqref{eq:proof-existence}}
First, let us verify \eqref{eq:proof-existence}, \Ie
for any $\ell=1,\ldots,\sigma+1$ and $x \in \cN$,
\begin{equation*}
        \sum_{k=1}^{N} Q_{u^{(x,i)},k} \geq \min\left\{\delta_{x},  \pb{X=x|U=u^{(x,\ell)}}\right\}- \pb{X=x|U=u^{(x,\ell-1)}}, \forall i =1,\ldots,\ell-1.
\end{equation*}
Roughly speaking, the summation of the $u^{(x,i)}$-th row of $Q$ should be larger than or equal to the right-hand side of \eqref{eq:proof-existence} at any points when the algorithm update. Equivalently, for any given $\ell=1,\ldots,\sigma+1$ and $x,u \in \cN$, if $u \in \left\{u^{(x,i)}: i =1,\ldots,\ell-1 \right\}$, we need to verify that  
\begin{equation}
\label{eq:justification-t1}
    \sum_{k=1}^{N} Q_{u,k} \geq \min\left\{\delta_{x},  \pb{X=x|U=u^{(x,\ell)}}\right\}- \pb{X=x|U=u^{(x,\ell-1)}},
\end{equation}

From \eqref{eq:alg-sum} and \eqref{eq:alg-t3}, it is clear that we subtract exactly the same value as the right-hand side of \eqref{eq:proof-existence} from $\sum_{k=1}^{N} Q_{u,k}$ during each update. Therefore, by summing over $x$ and $\ell$, it is sufficient to show that for any given $u$, we have
\begin{equation*}
    \sum_{k=1}^{N} Q_{u,k} \geq \sum_{\ell=1}^{\sigma+1} \sum_{x \in \cN: u \in \left\{u^{(x,i)}: i =1,\ldots,\ell-1 \right\} }  \min\left\{\delta_{x},  \pb{X=x|U=u^{(x,\ell)}}\right\}- \pb{X=x|U=u^{(x,\ell-1)}},
\end{equation*}
where $Q_{u,k}$ denotes the initializations in \eqref{eq:initialize}. To be precise, we re-write it by 
  \begin{equation}
\label{eq:justification-t2}
\begin{aligned}
& \sum_{k=1}^{N} \max\left\{\pb{X=k|U=u}-\delta_k, 0\right\} \\
& \geq \sum_{\ell=1}^{\sigma+1} \sum_{x \in \cN: u \in \left\{u^{(x,i)}: i =1,\ldots,\ell-1 \right\} }  \min\left\{\delta_{x},  \pb{X=x|U=u^{(x,\ell)}}\right\}- \pb{X=x|U=u^{(x,\ell-1)}}.
\end{aligned}
\end{equation}

To establish \eqref{eq:justification-t2},
for a given $u \in \cN$, let us suppose that 
\begin{equation}
    u=u^{(1,\alpha_1)}=\cdots=u^{(N,\alpha_N)}.
\end{equation}
Then, the left-hand side of \eqref{eq:justification-t2} can be written as 
\[ \sum_{k: \alpha_k \geq \sigma+1} \left(\pb{X=k|U=u}-\delta_k\right),\]
while the right-hand side of \eqref{eq:justification-t2} can be written as
\begin{align*} 
    & \sum_{x \in \cN} \sum_{\ell: u \in \left\{u^{(x,i)}: i =1,\ldots,\ell-1 \right\}}   \min\left\{\delta_{x},  \pb{X=x|U=u^{(x,\ell)}}\right\}- \pb{X=x|U=u^{(x,\ell-1)}}\\
    & = \sum_{x: \alpha_x \leq \sigma} 
    \sum_{\ell = \alpha_{x}+1}^{\sigma+1}   
    \min\left\{\delta_{x},  \pb{X=x|U=u^{(x,\ell)}}\right\}- \pb{X=x|U=u^{(x,\ell-1)}} \\
    & = \sum_{x: \alpha_x \leq \sigma} 
    \sum_{\ell = \alpha_{x}+1}^{\sigma+1}  
    \left(\delta_{x}-  \pb{X=x|U=u^{(x,\alpha_x)}}\right) \\
    & = \sum_{x: \alpha_x \leq \sigma} 
    \left(\delta_{x}-  \pb{X=x|U=u}\right).
\end{align*} 
Therefore, it remains to show that 
\[\sum_{k: \alpha_k \geq \sigma+1} \left(\pb{X=k|U=u}-\delta_k\right) \geq \sum_{k \in \cN: \alpha_k \leq \sigma} 
    \left(\delta_{k}-  \pb{X=k|U=u}\right), 
 \]
which can be written as 
\begin{equation}
\label{eq:justification-zero}
\sum_{k \in \cN} \left(\pb{X=k|U=u}-\delta_k\right) \geq 0. 
\end{equation}
Since 
\[\sum_{k \in \cN} \pb{X=k|U=u} = \sum_{k \in \cN} \delta_k =1, 
\]
we can easily see that \eqref{eq:justification-zero} holds, which completes the proof.

One may notice that we indeed show that the equality holds in \eqref{eq:justification-t2}, which implies that $Q$ would be an all-zeros matrix after iterations, \Ie for any given $\bar{x},\bar{u} \in \cN$,   
\begin{equation}
\label{eq:proof-all-zero}
    \sum_{\ell=1}^{\sigma+1} \sum_{x=1}^{N} 
    \sum_{i,j:x_{i,j}=\bar{x},u^{(x,i)}=\bar{u}}  v_{i,j} = 
    \max\left\{\pb{X=\bar{x}|U=\bar{u}}-\delta_{\bar{x}}, 0\right\}.
\end{equation}
Here, we slightly abuse the notation since $v_{i,j}$ should be independent of $x$ and $\ell$ as described.

\subsection{Justification of the algorithm}
In this subsection, we will verify that the proposed algorithm works, \Ie it ends up with producing a distribution $\pb{z,x|u}$ satisfying that 
\begin{equation}
\label{eq:proof-decode-jusify}
    \pb{z,x}=0,~\forall x \notin z,
\end{equation}
\begin{equation}
\label{eq:proof-privacy-jusify}
    \pb{z|u}=\pb{z|u'},~\forall z \in \cZ~\text{and}~u,u' \in \cN,
\end{equation}
and 
\begin{equation}
\label{eq:proof-rate-jusify}
    \pb{|Z| = i} = \theta_i,~\forall i=1,\ldots,\sigma+1.
\end{equation}

As claimed, $\qb{z,x,u}$ stores the non-zero valued probability of $\pb{z,x|u}$. To establish this claim, we need to verify that 
\begin{equation}
\label{eq:proof-distribution}
    \sum_{z':(z',x,u) \in \cA} \qb{z',x,u} = \pb{x|u},~\forall x,u \in \cN.  
\end{equation}

Since $\cA$ is the union set of $\cA_{k,x,\ell}$ for all possible $k$, $x$ and $\ell$, let us focus on $\cA_{k,x,\ell}$ defined in \eqref{eq:proof-arguments}. Recall that  
\begin{equation*}
\begin{aligned}
    \cA_{k,x,\ell} = & \left\{\left(\bar{z},\bar{x},\bar{u}\right): \bar{z}=z_k, \bar{x}=\zeta_k(i), \bar{u}=u^{(x,i)}, i=1,\ldots,\ell-1 \right\} \\
    & ~~~~~ \bigcup \left\{\left(\bar{z},\bar{x},\bar{u}\right):\bar{z}=z_k,\bar{x}=x,\bar{u} \in \cN\setminus \left\{u^{(x,i)}:i=1,\ldots,\ell-1 \right\} \right\}.
\end{aligned}
\end{equation*}
Denote 
\[\cA^{(1)}_{k,x,\ell} = \left\{\left(\bar{z},\bar{x},\bar{u}\right): \bar{z}=z_k, \bar{x}=\zeta_k(i), \bar{u}=u^{(x,i)}, i=1,\ldots,\ell-1 \right\},\]
and 
\[\cA^{(2)}_{k,x,\ell} = \left\{\left(\bar{z},\bar{x},\bar{u}\right):\bar{z}=z_k,\bar{x}=x,\bar{u} \in \cN\setminus \left\{u^{(x,i)}:i=1,\ldots,\ell-1 \right\} \right\}.\]

For any given $\bar{u}, \bar{x} \in \cN$, we have
\begin{align}
    \sum_{z:\left(z,\bar{u}, \bar{x}\right) \in \cA} \qb{z,\bar{x},\bar{u}} 
    & \utag{a}{=} \sum_{\ell=1}^{\sigma+1} \sum_{z:\left(z,\bar{u}, \bar{x}\right) \in \cA_{\ell}} \qb{z,\bar{x},\bar{u}}  \nonumber \\
    & \utag{b}{=} \sum_{\ell=1}^{\sigma+1} 
    \sum_{x=1}^{N}  \sum_{k:\left(\cdot,\bar{x},\bar{u}\right) \in \cA_{k,x,\ell}}  \nu_{k,x,\ell} \nonumber \\
     & = \sum_{\ell=1}^{\sigma+1} 
    \sum_{x=1}^{N} 
    \left(\sum_{k:\left(\cdot,\bar{x},\bar{u}\right) \in \cA^{(1)}_{k,x,\ell}}  \nu_{k,x,\ell} + \sum_{k:\left(\cdot,\bar{x},\bar{u}\right) \in \cA^{(2)}_{k,x,\ell}}  \nu_{k,x,\ell} \right) \nonumber \\
    & = \sum_{\ell=1}^{\sigma+1} 
    \sum_{x=1}^{N}  \sum_{k:\left(\cdot,\bar{x},\bar{u}\right) \in \cA^{(1)}_{k,x,\ell}}  \nu_{k,x,\ell} + 
    \sum_{\ell=1}^{\sigma+1} 
    \sum_{x=1}^{N} \sum_{k:\left(\cdot,\bar{x},\bar{u}\right) \in \cA^{(2)}_{k,x,\ell}}  \nu_{k,x,\ell}. \label{eq:proof-distribution-two-terms}
\end{align}
where \uref{a} follows because $\cA_{\ell}$ are disjoint for distinct $\ell$ and \uref{b} follows from \eqref{eq:assign}.

For any given $\bar{u} \in \cN$, suppose that 
\begin{equation}
    \bar{u} = u^{(1,\alpha_1)} = \cdots = u^{(N,\alpha_N)}. 
\end{equation}
Then, the first term of the right-hand side of \eqref{eq:proof-distribution-two-terms} can be written as 
\begin{align}
    \sum_{\ell=1}^{\sigma+1} \sum_{x=1}^{N} \sum_{k:\left(\cdot,\bar{x},\bar{u}\right) \in \cA^{(1)}_{k,x,\ell}}  \nu_{k,x,\ell}
    & = \sum_{\ell=1}^{\sigma+1} \sum_{x:\alpha_x \leq \ell-1}   \sum_{k: \bar{x}=\zeta_k(\alpha_x)}  \nu_{k,x,\ell} \nonumber \\
    & \utag{a}{=} \sum_{\ell=1}^{\sigma+1} \sum_{x:\alpha_x \leq \ell-1}   v_{\alpha_x,\bar{x}} \nonumber \\
    & \utag{b}{=} \max\left\{\pb{X=\bar{x}|U=\bar{u}}-\delta_{\bar{x}}, 0 \right\}, \label{eq:proof-term1}
\end{align}
where \uref{a} follows from \eqref{eq:proof-multiset-value}, and \uref{b} follows from \eqref{eq:proof-all-zero}. Note that we slightly abuse the notation $\zeta_k$ and $v_{\alpha_k,\bar{x}}$ here since they are independent of $x$ and $\ell$. 

The second term of the right-hand side of \eqref{eq:proof-distribution-two-terms} can be written as 
\begin{align}
    \sum_{\ell=1}^{\sigma+1} \sum_{x=1}^{N} \sum_{k:\left(\cdot,\bar{x},\bar{u}\right) \in \cA^{(2)}_{k,x,\ell}}  \nu_{k,x,\ell}
    & = \sum_{\ell=1}^{\sigma+1} \sum_{x:x=\bar{x}} \mathbbm{1}_{\alpha_x \geq \ell}  \sum_{k}  \nu_{k,x,\ell} \nonumber \\
    & = \sum_{\ell=1}^{\sigma+1} \mathbbm{1}_{\alpha_{\bar{x}} \geq \ell} \sum_{k}  \nu_{k,\bar{x},\ell} \nonumber \\
    & \utag{a}{=} \sum_{\ell=1}^{\sigma+1} \mathbbm{1}_{\alpha_{\bar{x}} \geq \ell} \left(  \min\left\{\delta_{\bar{x}},  \pb{X=\bar{x}|U=u^{(\bar{x},\ell)}}\right\}- \pb{X=\bar{x}|U=u^{(\bar{x},\ell-1)}} \right) \nonumber \\
    & = \sum_{\ell=1}^{\min\{\sigma+1,\alpha_{\bar{x}}\}} \left(  \min\left\{\delta_{\bar{x}},  \pb{X=\bar{x}|U=u^{(\bar{x},\ell)}}\right\}- \pb{X=\bar{x}|U=u^{(\bar{x},\ell-1)}} \right) \nonumber \\
    & =  \min\left\{\delta_{\bar{x}},\pb{X=\bar{x}|U=u^{(\bar{x},\alpha_{\bar{x}})}}\right\}  \nonumber \\
    & = \min\left\{\delta_{\bar{x}},\pb{X=\bar{x}|U=\bar{u}}\right\}, \label{eq:proof-term2}
\end{align}
where \uref{a} follows from \eqref{eq:alg-sum} and \eqref{eq:proof-multiset-value}.
 
Finally, it is easy to see that 
\begin{align*}
    \sum_{\ell=1}^{\sigma+1} \sum_{x=1}^{N} \sum_{k:\left(\cdot,\bar{x},\bar{u}\right) \in \cA_{k,x,\ell}}  \nu_{k,x,\ell} & = \sum_{\ell=1}^{\sigma+1} \sum_{x=1}^{N} \sum_{k:\left(\cdot,\bar{x},\bar{u}\right) \in \cA^{(1)}_{k,x,\ell}}  \nu_{k,x,\ell} + \sum_{\ell=1}^{\sigma+1} \sum_{x=1}^{N} \sum_{k:\left(\cdot,\bar{x},\bar{u}\right) \in \cA^{(2)}_{k,x,\ell}}  \nu_{k,x,\ell} \\
    & \utag{a}{=} \max\left\{\pb{X=\bar{x}|U=\bar{u}}-\delta_{\bar{x}}, 0 \right\} + \min\left\{\delta_{\bar{x}},\pb{X=\bar{x}|U=\bar{u}}\right\} \\
    & = \pb{X=\bar{x}|U=\bar{u}},
\end{align*}
where \uref{a} follows from \eqref{eq:proof-term1} and \eqref{eq:proof-term2}. We finish justifying \eqref{eq:proof-distribution}.


Now, let us verify the two constraints \eqref{eq:proof-decode-jusify} and \eqref{eq:proof-privacy-jusify}, \Ie 
\begin{equation*}
    \pb{z,x}=0,~\forall x \notin z,
\end{equation*}
and
\begin{equation*}
    \pb{z|u}=\pb{z|u'},~\forall z \in \cZ~\text{and}~u,u' \in \cN.
\end{equation*}

As we have shown that 
\begin{equation*}
\pb{z,x|u} = 
\begin{cases}
    \qb{z,x,u}, & (z,x,u) \in \cA, \\
    0, & (z,x,u) \notin \cA,
\end{cases}
\end{equation*}
to verify the two constraints, it is equivalent to show that
\begin{enumerate}
     \item For any $(z,x,u) \in \cA$, it must have $x \in z$.
     \item For any given $z \in \cZ$ and $u,u' \in \cN$, we have  
    \begin{equation*}
    \sum_{x:(z,x,u) \in \cA}\qb{z,x,u} = \sum_{x:(z,x,u') \in \cA}\qb{z,x,u'}.
    \end{equation*}
 \end{enumerate}

Since $\cA$ is the union set of $\cA_{k,x,\ell}$ for all possible $k$, $x$ and $\ell$, it is sufficient to show the following two claims:
\begin{enumerate}
     \item For any $(z,x,u) \in \cA_{k,x,\ell}$, it must have $x \in z$.
     \item For any given $z \in \cZ$ and $u,u' \in \cN$, we have  
 \begin{equation}
    \label{eq:proof-justify-two-constraint-2} 
     \sum_{x,\bar{x},k:\left(z,x,u\right) \in \cA_{k,\bar{x},\ell}}  \nu_{k,\bar{x},\ell} =  \sum_{x,\bar{x},k:\left(z,x,u'\right) \in \cA_{k,\bar{x},\ell}}  \nu_{k,\bar{x},\ell},
    \end{equation}
    where $\ell=|z|$, $x, \bar{x} = 1,\ldots,N$ and $k=1,\ldots,e_{\bar{x},\ell}$.
 \end{enumerate} 


Recall the definition of $\cA_{k,x,\ell}$ for any $k$, $x$ and $\ell$, \Ie 
\begin{equation*}
\begin{aligned}
    \cA_{k,x,\ell} = & \left\{\left(\bar{z},\bar{x},\bar{u}\right): \bar{z}=z_k, \bar{x}=\zeta_k(i), \bar{u}=u^{(x,i)}, i=1,\ldots,\ell-1 \right\} \\
    & ~~~~~ \bigcup \left\{\left(\bar{z},\bar{x},\bar{u}\right):\bar{z}=z_k,\bar{x}=x,\bar{u} \in \cN\setminus \left\{u^{(x,i)}:i=1,\ldots,\ell-1 \right\} \right\}.
\end{aligned}
\end{equation*}

Since $z_k=\{\zeta_k,x\}$ as previously defined, we can easily see that $\bar{x} \in \bar{z}$ for any $(\bar{z},\bar{x},\bar{u}) \in \cA_{k,x,\ell}$, which justifies the first claim.

For the second claim, we re-write \eqref{eq:proof-justify-two-constraint-2} by
\begin{equation*}
\sum_{\bar{x},k} \nu_{k,\bar{x},\ell} \left(\sum_{x:\left(z,x,u\right) \in \cA_{k,\bar{x},\ell}} \right) =
\sum_{\bar{x},k} \nu_{k,\bar{x},\ell} \left(\sum_{x:\left(z,x,u'\right) \in \cA_{k,\bar{x},\ell}} \right).
\end{equation*} 
By inspecting the definition of $\cA_{k,x,\ell}$, we can see that there exists exactly one tuple $(z,\cdot,u) \in \cA_{k,x,\ell}$ for any given $u$ and $z$, so we have
\begin{equation*}
\sum_{x:\left(z,x,u\right) \in \cA_{k,\bar{x},\ell}}  =
\sum_{x:\left(z,x,u'\right) \in \cA_{k,\bar{x},\ell}}, 
\end{equation*} 
which completes proving \eqref{eq:proof-justify-two-constraint-2}.

Finally, let us justify \eqref{eq:proof-rate-jusify}, \Ie
\begin{equation*}
    \pb{|Z| = i} = \theta_i,~\forall i=1,\ldots,\sigma+1,
\end{equation*}
whose proof is given as follows:
\begin{align*}
 \pb{|Z| = \ell} 
 & = \sum_{z:|z|=\ell} \sum_{u}\pb{u} \pb{z|u} \\ 
 & \utag{a}{=}  \sum_{z:|z|=\ell} \pb{z|\bar{u}} \\ 
 & = \sum_{z:|z|=\ell} \sum_{x} \qb{z,x,\bar{u}} \\
 & = \sum_{(z,x,\bar{u}):(z,x,\bar{u}) \in \cA_{\ell}}  \qb{z,x,\bar{u}} \\
& \utag{b}{=}  \sum_{(z,x,\bar{u}) \in \cA_{\ell}} \sum_{x'=1}^{N} 
         \sum_{k:\left(z,x,\bar{u}\right) \in \cA_{k,x',\ell}}  \nu_{k,x',\ell} \\
& =  \sum_{x'=1}^{N} \sum_{k} \sum_{(z,x,\bar{u}):\left(z,x,\bar{u}\right) \in \cA_{k,x',\ell}}  \nu_{k,x',\ell} \\
& \utag{c}{=}  \sum_{x'=1}^{N} \sum_{k}  \nu_{k,x',\ell} \\
& \utag{d}{=}  \sum_{x'=1}^{N} \min\left\{\delta_{x'},  \pb{X=x'|U=u^{(x',\ell)}}\right\}- \pb{X=x'|U=u^{(x',\ell-1)}} \\ 
    & =  \theta_\ell,         
\end{align*}
where \uref{a} follows from \eqref{eq:proof-privacy-jusify}, \uref{b} follows from \eqref{eq:assign}, \uref{c} follows because there exists exactly one tuple $(\cdot,\cdot,\bar{u}) \in \cA_{k,x,\ell}$ for given $k$, $x$ and $\ell$, and \uref{d} follows from \eqref{eq:alg-sum} and \eqref{eq:proof-multiset-value}.

\subsection{Complexity analysis of the algorithm}
In this subsection, we will discuss the complexity of the algorithm to construct the desired output distribution $\pb{z|x,u}$. The purpose of the complexity analysis here is to justify that the proposed algorithm is tractable, \Ie with $\text{poly}(N)$ complexity. By utilizing some data structures, one may possibly reduce the complexity by one or two orders, which is beyond the interest of this paper.

\emph{Pre-calculation:} The Pre-calculation involves two steps, \Ie sorting $\pb{x|u}$ for all $x \in \cN$ and picking the set $\{\delta_j:j=1,\ldots,N\}$. The complexity of sorting is $\cO\left(N^2\log N\right)$ and picking $\{\delta_j:j=1,\ldots,N\}$ is $\cO\left(N\right)$. 

\emph{Initialization:} The initialization of $Q$ is $\cO(N^2)$.



\emph{Procedure:} The main procedure is divided into the following steps:
\begin{enumerate}
    \item For the fixed $\ell$, $x$ and $u_i$, we `randomly' choose a collection of pairs $I_i \times V_i$. We can easily see that if \eqref{eq:proof-existence} is satisfied, then $V_i$ (and $I_i$) can be chosen by linear time $\cO(N)$, \Ie going through the $u_i$-th row of the matrix $Q$. Hence, we can obtain $\{I_i,V_i:i=1,\ldots,\ell-1\}$ for a fixed $\ell$ and $x$ with $\cO((\ell-1)N)$.
    \item For a fixed $\ell$ and $x$,  we need to get a collection of pairs $\left\{( \zeta_{k},\nu_{k}):k=1,2,\ldots,e \right\}$ given $\{I_i,V_i:i=1,\ldots,\ell-1\}$. Each $\nu_k$ is obtained by finding the minimal value of $\mathrm{B}_v$, which is a set of length $\ell-1$, so finding each $\nu_k$ (and $\zeta_k$) takes $\cO(\ell-1)$. For each $z_k$, the set $\cA_k$ can be characterized by traversing $z_k$ with linear time $\cO(\ell-1)$. As each $e_i$ is bounded by $N-1$, $e$ is bounded by 
    \[e \leq \sum_{i=1}^{\ell-1} e_i = (\ell-1)(N-1),\]
    and hence determining all $\left\{ \cA_{k},\nu_{k}:k=1,2,\ldots,e \right\}$ takes $\cO((\ell-1)^2(N-1))$. Therefore, obtaining $\{\cA_{k,x,\ell}, \nu_{k,x,\ell}: 1 \leq \ell \leq \sigma+1, 1 \leq x \leq N, 1 \leq k \leq e_{x,\ell}
\}$ at most takes $\cO(\sigma^3 N^2)$.
    \item At the end, we need to finish the probability assignment (c.f.\eqref{eq:assign}). However, since the size of the alphabet of $Z$ is exponential, $\pb{z,x|u}$ has an exponential number of elements. To avoid the exponential overhead, we may take advantage of the sparsity of $\pb{z,x|u}$ to output non-zero positions and values (all others are assumed to be zero) instead of pushing out the distribution $\pb{z,x|u}$ entirely and directly. Indeed, $\cA_{k,x,\ell}$ contains the non-zero positions and the corresponding value is $\nu_{k,x,\ell}$. However, since some positions may appear in $\cup_{k,x,\ell} \cA_{k,x,\ell}$ multiple times, we may need to merge them, this can be done by simply checking all $\{\cA_{k,x,\ell}: 1 \leq \ell \leq \sigma+1, 1 \leq x \leq N, 1 \leq k \leq e_{x,\ell}\}$ which is $\cO(\sigma^3 N^3)$.   
\end{enumerate}

In summary, the worst case complexity of the algorithm is $\cO(N^6)$.

\end{document}